\title{Reconfiguration of Colorable Sets in Classes of Perfect Graphs%
\footnote{Partially supported by MEXT/JSPS KAKENHI grant numbers
24106004, 
25730003. 
The second author was partially supported by FY 2015 Researcher Exchange Program between JSPS and NSERC.\@}}
\author{%
Takehiro Ito\thanks{%
Graduate School of Information Sciences, Tohoku University. Aoba-yama 6-6-05, Sendai, 980-8579, Japan.
\texttt{takehiro@ecei.tohoku.ac.jp}
}
\and Yota Otachi\thanks{%
Faculty of Advanced Science and Technology, Kumamoto University. 2-39-1 Kurokami, Chuo-ku, Kumamoto, 860-8555 Japan.
\texttt{otachi@cs.kumamoto-u.ac.jp}}}
\newcommand{\onestep}{\leftrightarrow}
\newcommand{\sevstep}{\leftrightsquigarrow}
\newcommand{\notsevstep}{\mathbin{{\leftrightsquigarrow}\hskip-.85em{/}\hskip.45em}}
\newcommand{\TAR}[1]{\mathsf{TAR}(#1)}
\newcommand{\TS}{\mathsf{TS}}
\newcommand{\TJ}{\mathsf{TJ}}
\newcommand{\TARrule}{\mathsf{TAR}}
\newcommand{\CSR}{\textsc{CSR}}
\newcommand{\symdiff}[2]{#1 \vartriangle #2}
\newcommand{\distTARG}[5]{\mathsf{dist}_{\TAR{#5}}(#3,#4)}
\newcommand{\distTJG}[4]{\mathsf{dist_{TJ}}(#3,#4)}
\newcommand{\distTSG}[4]{\mathsf{dist_{TS}}(#3,#4)}
\newtheorem{theorem}{Theorem}[section]
\newtheorem{lemma}[theorem]{Lemma}
\newtheorem{corollary}[theorem]{Corollary}
\theoremstyle{plain}
\newtheorem{proposition}[theorem]{Proposition}
\newtheorem{claim}[theorem]{Claim}
\newcommand{\figdir}{.} 
\newcommand{\figref}[1]{Figure~\ref{#1}}
\definecolor{lightblue}{rgb}{0.5,0.5,1.0}
\definecolor{darkred}{rgb}{0.8,0,0}
\definecolor{darkgreen}{rgb}{0,0.5,0}
\definecolor{darkblue}{rgb}{0,0,0.5}
\renewcommand{\mid}{:}
\begin{document}

\maketitle

\begin{abstract}
A set of vertices in a graph is \emph{$c$-colorable} if the subgraph induced by the set has a proper $c$-coloring.
In this paper, we study the problem of finding a step-by-step transformation (reconfiguration) between two $c$-colorable sets in the same graph.
This problem generalizes the well-studied \textsc{Independent Set Reconfiguration} problem.
As the first step toward a systematic understanding of the complexity of this general problem, we study the problem on classes of perfect graphs.
We first focus on interval graphs and give a combinatorial characterization of the distance between two $c$-colorable sets.
This gives a linear-time algorithm for finding an actual shortest reconfiguration sequence for interval graphs.
Since interval graphs are exactly the graphs that are simultaneously chordal and co-comparability,
we then complement the positive result by showing that even deciding reachability is PSPACE-complete
for chordal graphs and for co-comparability graphs.
The hardness for chordal graphs holds even for split graphs.
We also consider the case where $c$ is a fixed constant
and show that in such a case the reachability problem is polynomial-time solvable for split graphs
but still PSPACE-complete for co-comparability graphs.
The complexity of this case for chordal graphs remains unsettled.
As by-products, our positive results give the first polynomial-time solvable cases (split graphs and interval graphs)
for \textsc{Feedback Vertex Set Reconfiguration}.
\end{abstract}


\section{Introduction}
\label{sec:intro}

Recently, the reconfiguration framework has been applied to several search problems.
In a reconfiguration problem, we are given two feasible solutions of a search problem and are asked 
to determine whether we can modify one to the other by repeatedly applying prescribed reconfiguration rules while keeping the feasibility~(see \cite{ItoDHPSUU11,vandenHeuvel13,Nishimura17}).
Studying such a problem is important for understanding the structure of the solution space of the underlying problem.
Computational complexity of reconfiguration problems has been studied intensively.
For example, the \textsc{Independent Set Reconfiguration} problem under 
the reconfiguration rules $\TS$~\cite{HearnD05}, $\TARrule$~\cite{ItoDHPSUU11}, and $\TJ$~\cite{KaminskiMM12}
is studied for several graph classes such as
planar graphs~\cite{HearnD05},
perfect graphs~\cite{KaminskiMM12},
claw-free graphs~\cite{BonsmaKW14},
trees~\cite{DemaineDFHIOOUY15}, 
interval graphs~\cite{BonamyB17}, and
bipartite graphs~\cite{LokshtanovM18}.

In this paper, we initiate the study on problems of reconfiguring \emph{colorable sets}, which generalizes \textsc{Independent Set Reconfiguration}.
For a graph $G=(V,E)$ and an integer $c \ge 1$, a vertex set $S \subseteq V$ is \emph{$c$-colorable} if the subgraph $G[S]$ induced by $S$ admits a proper $c$-coloring.
For example, the $1$-colorable sets in a graph are exactly the independent sets of the graph.
Recently, $c$-colorable sets have been studied from the viewpoint of wireless network optimization
(see \cite{AsgeirssonHT17,BentertBN17arxiv} and the references therein).
The \textsc{Colorable Set Reconfiguration} problem asks
given two $c$-colorable sets $S$ and $S'$ in a graph $G$, 
whether we can reach from $S$ to $S'$ by repeatedly applying local changes allowed.
We consider the following three local change operations (see Section~\ref{sec:pre} for formal definitions): 
\begin{itemize}
  \item $\TAR{k}$: either adding or removing one vertex while keeping the size of the set at least a given threshold $k$.
  \item $\TJ$: swap one member for one nonmember.
  \item $\TS$: swap one member for one nonmember along an edge.
\end{itemize}

In perfect graphs, being $c$-colorable is equivalent to having no clique of size more than $c$.
This property often makes problems related to coloring tractable.
Thus, to understand this very general problem, we start the study on \textsc{Colorable Set Reconfiguration} with classes of perfect graphs.
\figref{fig:classes} shows the graph classes studied in this paper and the inclusion relationships
(see Section~\ref{subsec:graph-classes} for definitions).
\begin{figure}[thb]
  \centering
  \includegraphics[scale=.9]{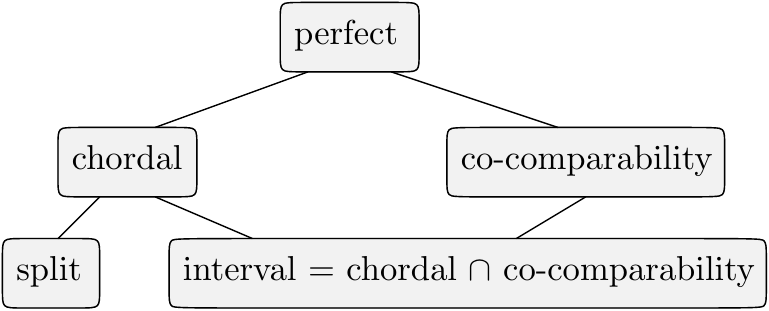}
  \caption{The graph classes studied in this paper.}
  \label{fig:classes}
\end{figure}

\subsection*{Our contribution}
Before we start our investigation on the reconfiguration problem,
we first fill a gap in the complexity landscape of the search problem \textsc{Colorable Set}
that asks for finding a large $c$-colorable set.
When $c=1$, \textsc{Colorable Set} is equivalent to the classical problem of finding a large independent set 
that can be solved in polynomial time for perfect graphs.
For larger $c$, it was only known that the case $c=2$ is NP-complete for perfect graphs~\cite{Addario-BerryKKLR10}.
To make the complexity status of \textsc{Colorable Set} for perfect graphs complete,
we show that it is NP-complete for any fixed $c \ge 2$ (Theorem~\ref{thm:cs_perfect}).

We then show complexity divergences among the classes of perfect graphs in \figref{fig:classes}, in particular under $\TARrule$ and $\TJ$.
See Table~\ref{tbl:summary} for a summary of our results.
Our results basically say that the problem under $\TARrule$ and $\TJ$ is tractable on interval graphs
but further generalization is not quite possible.

\begin{table}[]
\centering
\caption{Summary of the results. PSPACE-completeness results here apply to $\TS$ also, while polynomial-time algorithms do not.
The case of $c=1$ is equivalent to \textsc{Independent Set Reconfiguration.}}
\vspace*{.5ex}
\label{tbl:summary}
\begin{tabular}{c|c|c|c}
& \multicolumn{3}{c}{\cellcolor{lightgray!50}\textsc{Colorable Set Reconfiguration} under $\TARrule/\TJ$} \\
& \cellcolor{lightgray!25}$c=1$ & \cellcolor{lightgray!25}fixed $c \ge 2$ & \cellcolor{lightgray!25}arbitrary $c$ \\ \hline
\cellcolor{lightgray!50}perfect & \multicolumn{3}{c}{PSPACE-c} \\ \hline
\cellcolor{lightgray!50}co-comparability & PSPACE-c \cite{KaminskiMM12}\footnotemark & \multicolumn{2}{c}{PSPACE-c (Thm~\ref{thm:cocomp-hard})} \\ \hline
\cellcolor{lightgray!50}chordal & P \cite{KaminskiMM12} & \cellcolor{pink!60}? & PSPACE-c \\ \hline
\cellcolor{lightgray!50}split & P & P (Thm~\ref{thm:split_fixed}) & PSPACE-c (Thm~\ref{thm:split-hard}) \\ \hline
\cellcolor{lightgray!50}interval & P & \multicolumn{2}{c}{P (Thm~\ref{thm:interval})} \\ \hline
\cellcolor{lightgray!50}bipartite & NP-c \cite{LokshtanovM18} & \multicolumn{2}{c}{Trivial if $c \ge 2$} \\ \hline
\end{tabular}
\end{table}
\footnotetext{The reduction in \cite{KaminskiMM12} outputs co-comparability graphs. See also Theorem~\ref{thm:cocomp-hard} in this paper.}

More specifically, we first study the problem on interval graphs 
and show that a shortest reconfiguration sequence under $\TARrule$ can be found in linear time (Theorem~\ref{thm:interval}).
This implies the same result under $\TJ$.
Next we study the problem on split graphs. 
We show that the complexity depends on $c$.
When $c$ is a fixed constant, the problem is polynomial-time solvable under $\TARrule$ and $\TJ$ (Theorem~\ref{thm:split_fixed}).
If $c$ is a part of input, then we can show that the problem is PSPACE-complete under all rules, including $\TS$ (Theorem~\ref{thm:split-hard}).
While the hardness result applies also to chordal graphs,
it is unclear whether a similar positive result for chordal graphs can be obtained when $c$ is a fixed constant.
We only know that the case of $c=1$ under $\TARrule$ and $\TJ$ is polynomial-time solvable as chordal graphs are even-hole-free~\cite{KaminskiMM12}.
We finally show that for every fixed $c \ge 1$
the problem is PSPACE-complete for co-comparability graphs under all rules (Theorem~\ref{thm:cocomp-hard}).
Thus, our results are in some sense tight since the interval graphs are exactly the chordal co-comparability graphs
and split graphs are chordal graphs (see \figref{fig:classes}).


As a byproduct of Theorems~\ref{thm:interval} and \ref{thm:split_fixed}, 
the \textsc{Feedback Vertex Set Reconfiguration} problem~\cite{MouawadNRSS17} turns out to be polynomial-time solvable
for split graphs and interval graphs under $\TARrule$ and $\TJ$.
These are the first polynomial-time solvable cases for \textsc{Feedback Vertex Set Reconfiguration}.
To see the polynomial-time solvability, observe that the complements $V(G) \setminus S$ of $2$-colorable sets $S$ in a chordal graph $G$ are exactly the feedback vertex sets in the graph\footnote{%
Each induced cycle in a chordal graph is a triangle, and thus 2-colorable (or equivalently, odd cycle free) chordal graphs are forests.}
and reconfigurations of the complements are equivalent to reconfigurations of the original vertex sets under $\TARrule$ and $\TJ$.


\section{Preliminaries}
\label{sec:pre}

We say, as usual, that an algorithm for a graph $G = (V,E)$ runs in \emph{linear time}
if the running time of the algorithm is $O(|V| + |E|)$.

A \emph{proper $c$-coloring} of a graph assigns a color from $\{1,\dots,c\}$ to each vertex
in such a way that adjacent vertices have different colors.
Given a graph $G$ and an integer $c$, \textsc{Graph Coloring} asks whether $G$ admits a proper $c$-coloring.
This problem is NP-complete even if $c$ is fixed to 3~\cite{GareyJS76}.
The minimum $c$ such that a graph admits a proper $c$-coloring is its \emph{chromatic number}.

The \textsc{Colorable Set} problem is a generalization of \textsc{Graph Coloring}
where we find a large induced subgraph of the input graph that admits a proper $c$-coloring.
Let $G = (V,E)$ be a graph.
For a set of vertices $S \subseteq V$, we denote by $G[S]$ the subgraph induced by $S$.
A vertex set $S \subseteq V$ is \emph{$c$-colorable} in $G$ if $G[S]$ has a proper $c$-coloring.
Now the problem is defined as follows:
\begin{itemize}
  \setlength{\itemsep}{0pt}
  \item[] \textbf{Problem:} \textsc{Colorable Set}
  \item[] \textbf{Input:} A graph $G$ and integers $c$ and $k$.
  \item[] \textbf{Question:} Does $G$ have a $c$-colorable set of size at least $k$?
\end{itemize}

The problem of finding a large $c$-colorable set
is studied for a few important classes of perfect graphs (see \figref{fig:classes} and Table~\ref{tbl:summary}).
For the class of perfect graphs,
it is known that a maximum $1$-colorable set (that is, a maximum independent set) can be found in polynomial time~\cite{GrotschelLS88}.
Parameterized complexity~\cite{KrithikaN13} and approximation~\cite{FioriniKNR14} 
of \textsc{Colorable Set} on perfect graphs are also studied.

\subsection{Reconfiguration of colorable sets}

Let $S$ and $S'$ be $c$-colorable sets in a graph $G$.
Then, \emph{$S \onestep S'$ under $\TAR{k}$} for a nonnegative integer $k$
if $|S|, |S'| \ge k$ and $|\symdiff{S}{S'}| =  1$, where
$\symdiff{S}{S'}$ denotes the symmetric difference $(S \setminus S') \cup (S' \setminus S)$.
Here $S \onestep S'$ means that $S$ and $S'$ can be reconfigured to each other in one step
and $\TARrule$ stands for ``token addition \& removal.''
A sequence $\langle S_{0}, S_{1}, \dots, S_{\ell} \rangle$ of
$c$-colorable sets in $G$ is a \emph{reconfiguration sequence} of length $\ell$ between $S_{0}$ and $S_{\ell}$ under $\TAR{k}$
if $S_{i-1} \onestep S_{i}$ holds under $\TAR{k}$ for all $i \in \{1,2, \dots, \ell\}$.
A reconfiguration sequence under $\TAR{k}$ is simply called a \emph{$\TAR{k}$-sequence}.
We write \emph{$S_{0} \sevstep S_{\ell}$ under $\TAR{k}$} if there exists a $\TAR{k}$-sequence between $S_{0}$ and $S_{\ell}$.
Note that every reconfiguration sequence is \emph{reversible}, that is,
$S_{0} \sevstep S_{\ell}$ if and only if $S_{\ell} \sevstep S_{0}$.
Now the problem we are going to consider is formalized as follows:
\begin{itemize}
  \setlength{\itemsep}{0pt}
  \item[] \textbf{Problem:} \textsc{Colorable Set Reconfiguration} under $\TARrule$ ($\CSR_{\TARrule}$ for short)
  \item[] \textbf{Input:} A graph $G$, integers $c$ and $k$, and $c$-colorable sets $S$ and $S'$ of $G$.
  \item[] \textbf{Question:} Does $S \sevstep S'$ under $\TAR{k}$ hold?
\end{itemize}
We denote by $(G, c, S, S', k)$ an instance of $\CSR_{\TARrule}$.
We assume that both $|S| \ge k$ and $|S'| \ge k$ hold; otherwise it is trivially a no-instance.
Note that the lower bound $k$ guarantees that none of the sets in the reconfiguration sequence is too small.
Without the lower bound, the reachability problem becomes trivial as $S$ can always reach $S'$ via $\emptyset$.

For a $\CSR_\TARrule$-instance $(G, c, S, S', k)$,
we denote by $\distTARG{G}{c}{S}{S'}{k}$ the length of a shortest $\TAR{k}$-sequence
in $G$ between $S$ and $S'$; if there is no such a sequence, then
we set $\distTARG{G}{c}{S}{S'}{k} = \infty$.

We note that $\CSR_{\TARrule}$ is a decision problem and hence does not require the specification of an actual $\TAR{k}$-sequence.
Similarly, the shortest variant of $\CSR_{\TARrule}$ simply requires to output the value of $\distTARG{G}{c}{S}{S'}{k}$.

\subsubsection*{Other reconfiguration rules}
Although the $\TARrule$ rule is our main target,
we also study two other well-known rules $\TJ$ (token jumping) and $\TS$ (token sliding).
Let $S$ and $S'$ be $c$-colorable sets in a graph $G$.
For $\TJ$ and $\TS$, we additionally assume that $|S| = |S'|$
because these rules do not change the size of a set.
Now the rules are defined as follows:
\begin{itemize}
  \item $S \onestep S'$ under $\TJ$ if $|S \setminus S'| = |S' \setminus S| = 1$;
  
  \item $S \onestep S'$ under $\TS$ if $|S \setminus S'| = |S' \setminus S| = 1$ and
  the two vertices in $\symdiff{S}{S'}$ are adjacent in $G$.
\end{itemize}

Reconfiguration sequences under $\TJ$ and $\TS$
as well as the reconfiguration problems $\CSR_{\TJ}$ and $\CSR_{\TS}$ are defined analogously.
An instance of $\CSR_{\TJ}$ or $\CSR_{\TS}$ is represented as $(G,c,S,S')$,
and $\distTJG{G}{c}{S}{S'}$ and $\distTSG{G}{c}{S}{S'}$ are defined in the same way.

The following relation can be shown in almost the same way as Theorem~1 in~\cite{KaminskiMM12}
and means that $\CSR_{\TJ}$ is not harder than $\CSR_{\TARrule}$ in the sense of Karp reductions.
\begin{lemma}
\label{lem:TJ=TAR}
Let $S$ and $S'$ be $c$-colorable sets of size $k+1$ in a graph $G$.
Then, $S \sevstep S'$ under $\TAR{k}$ if and only if $S \sevstep S'$ under $\TJ$.
Furthermore, $\distTARG{G}{c}{S}{S'}{k} = 2 \cdot \distTJG{G}{c}{S}{S'}$ holds.
\end{lemma}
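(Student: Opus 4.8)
The plan is to establish the two implications separately while tracking the matching distance inequalities, so that the equivalence and the identity $\distTARG{G}{c}{S}{S'}{k} = 2\distTJG{G}{c}{S}{S'}$ drop out together (reading $2\cdot\infty$ as $\infty$). The only structural fact I will use repeatedly is that $c$-colorability is hereditary: restricting a proper $c$-coloring to a subset is still proper, so every subset of a $c$-colorable set is $c$-colorable.

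For the direction ``$\TJ \Rightarrow \TAR{k}$'' I would simply simulate each token jump by a removal followed by an addition. Given a $\TJ$-sequence $\langle T_0,\dots,T_d\rangle$ from $S$ to $S'$ with all $|T_i| = k+1$, for each step write $\symdiff{T_{i-1}}{T_i} = \{v_i, u_i\}$ with $v_i\in T_{i-1}$ and insert $T_{i-1}\setminus\{v_i\}$ between $T_{i-1}$ and $T_i$. The inserted set is $c$-colorable (a subset of $T_{i-1}$) and has size $k$, so the resulting length-$2d$ sequence is a legal $\TAR{k}$-sequence. Hence $S\sevstep S'$ under $\TJ$ implies $S\sevstep S'$ under $\TAR{k}$, and $\distTARG{G}{c}{S}{S'}{k}\le 2\distTJG{G}{c}{S}{S'}$.

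For the direction ``$\TAR{k}\Rightarrow\TJ$'' I would start from a shortest $\TAR{k}$-sequence $\langle S_0,\dots,S_\ell\rangle$ and, among the shortest ones, choose one minimizing $\sum_i |S_i|$. Being shortest, it has no immediate repetition $S_{i-1} = S_{i+1}$. The key claim is that no $S_i$ has size at least $k+2$: otherwise an index $i$ of maximum size is interior and a strict local maximum, so $S_i = S_{i-1}\cup\{a\}$ and $S_{i+1} = (S_{i-1}\setminus\{b\})\cup\{a\}$ with $a\notin S_{i-1}\ni b$; replacing $S_i$ by the subset $S_{i-1}\setminus\{b\}$ (which still has size $\ge k$) keeps the sequence a shortest $\TAR{k}$-sequence but lowers $\sum_i|S_i|$ by $2$, a contradiction. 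Consequently every $S_i$ has size $k$ or $k+1$, and since the size changes by one at each step with $|S_0| = |S_\ell| = k+1$, the sizes alternate $k+1,k,k+1,\dots,k+1$ with $\ell$ even. Then for each $j$ one has $S_{2j+1} = S_{2j}\setminus\{x_j\} = S_{2j+2}\setminus\{y_j\}$ with $x_j\ne y_j$ (no immediate repetition), so $S_{2j}\onestep S_{2j+2}$ under $\TJ$; thus $\langle S_0,S_2,\dots,S_\ell\rangle$ is a $\TJ$-sequence of length $\ell/2$, which gives $\distTJG{G}{c}{S}{S'}\le\tfrac12\distTARG{G}{c}{S}{S'}{k}$ and, in particular, reachability under $\TAR{k}$ implies reachability under $\TJ$.

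Putting the two inequalities together yields the equality (and hence the equivalence, since $\distTARG{G}{c}{S}{S'}{k}$ is finite exactly when $\distTJG{G}{c}{S}{S'}$ is). I expect the exchange step that rules out sets of size $\ge k+2$ in an optimal sequence to be the only delicate part; it mirrors the argument used for independent sets in Theorem~1 of \cite{KaminskiMM12}, and nothing in it is specific to independent sets, since it uses only that being $c$-colorable is preserved under taking subsets.
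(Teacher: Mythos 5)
Your proof is correct and is essentially the argument the paper has in mind: the paper establishes this lemma only by reference to Theorem~1 of Kami\'nski, Medvedev, and Milani\v{c}, and your two directions (simulating each token jump by a removal followed by an addition, and flattening a sum-minimal shortest $\TAR{k}$-sequence so that every set has size $k$ or $k+1$ and the even-indexed sets form a $\TJ$-sequence) reproduce that argument exactly. The only adaptation needed for colorable sets is that $c$-colorability is hereditary, which you correctly identify as the sole property the exchange step relies on.
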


To make the presentation easier, we often use the shorthands $S+v$ for $S \cup \{v\}$ and $S-v$ for $S \setminus \{v\}$.
For a vertex $v$ of a graph $G$, we denote the neighborhood of $v$ in $G$ by $N_{G}(v)$.

\subsection{Graph classes}
\label{subsec:graph-classes}
A \emph{clique} in a graph is a set of pairwise adjacent vertices.
A graph is \emph{perfect} if the chromatic number equals the maximum clique size for every induced subgraph~\cite{Golumbic04}.
The following fact follows directly from the definition of perfect graphs
and will be used throughout this paper.
\begin{proposition}
\label{prop:colorable=noclique}
A vertex set $S \subseteq V(G)$ of a perfect graph $G$
is $c$-colorable if and only if $G[S]$ has no clique of size more than $c$.
\end{proposition}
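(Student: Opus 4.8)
The plan is to prove the two directions separately, noting in advance that only one of them actually uses perfection. For the forward direction, suppose $S$ is $c$-colorable and fix a proper $c$-coloring of $G[S]$. If $K \subseteq S$ induces a clique, then the vertices of $K$ are pairwise adjacent and hence receive pairwise distinct colors, so $|K| \le c$. Thus $G[S]$ has no clique of size more than $c$. This step is valid for an arbitrary graph $G$ and requires no hypothesis.

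For the converse, suppose $G[S]$ has no clique of size more than $c$, i.e.\ $\omega(G[S]) \le c$. The key point is that perfection is hereditary under induced subgraphs: any induced subgraph of $G[S]$ is also an induced subgraph of $G$, so by the defining property of the perfect graph $G$ it has equal chromatic number and clique number; hence $G[S]$ is itself perfect. Applying the definition of perfection once more, this time to $G[S]$ as an induced subgraph of itself, yields $\chi(G[S]) = \omega(G[S]) \le c$, so $G[S]$ admits a proper $c$-coloring and $S$ is $c$-colorable.

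I do not expect any real obstacle here. The only point meriting a word of care is that the ``for every induced subgraph'' quantifier built into the definition of a perfect graph is precisely what makes perfection hereditary, and it is this heredity that lets us move from the global hypothesis ``$G$ is perfect'' to the equality $\chi(G[S]) = \omega(G[S])$ that the argument actually invokes.
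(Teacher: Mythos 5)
Your proof is correct and is exactly the argument the paper has in mind: the paper states this proposition without proof, remarking only that it ``follows directly from the definition of perfect graphs,'' and your two directions (cliques force distinct colors; heredity of perfection gives $\chi(G[S])=\omega(G[S])\le c$) are the standard way to spell that out. No issues.
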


There are many graph classes of perfect graphs.
Chordal graphs form one of the most well-known subclasses of perfect graphs,
where a graph is \emph{chordal} if it contains no induced cycle of length greater than 3.

Co-comparability graphs form another large class of perfect graphs.
A graph $G = (V,E)$ is a \emph{co-comparability graph} if there is a linear ordering $\prec$ on $V$
such that $u \prec v \prec w$ and $\{u,w\} \in E$ imply $\{u,v\} \in E$ or $\{v,w\} \in E$.
Although they are less known than chordal graphs,
co-comparability graphs generalize several important graph classes
such as interval graphs, permutation graphs, trapezoid graphs, and co-bipartite graphs (see \cite{Golumbic04,Spinrad03}).

The classes of chordal graphs and co-comparability graphs are incomparable.%
\footnote{A cycle of four vertices is a co-comparability graph but not chordal.
The \emph{net} graph obtained by attaching a pendant vertex to each vertex of a triangle is chordal but not a co-comparability graph.}
It is known that the class of interval graphs characterizes their intersection;
namely, a graph is an interval graph if and only if it is a co-comparability graph and chordal~\cite{GilmoreH64}.
Recall that a graph is an \emph{interval graph} if it is the intersection graph of closed intervals on the real line.

Another well-studied subclass of chordal graphs (and hence of perfect graphs) is the class of split graphs.
A graph $G = (V,E)$ is a \emph{split graph} if $V$ can be partitioned into a clique $K$ and an independent set $I$.
To emphasize that $G$ is a split graph, we write $G = (K,I; E)$.
The classes of interval graphs and split graphs are incomparable.%
\footnote{A path with five or more vertices is an interval graph but not a split graph.
The net graph is a split graph but not an interval graph.}


\section{NP-hardness of \textsc{Colorable Set} on perfect graphs for fixed $c \ge 2$}

It is known that 
if $c$ is unbounded, \textsc{Colorable Set} is polynomial-time solvable
for interval graphs~\cite{YannakakisG87,MaratheRR92} and
more generally for co-comparability graphs~\cite{Frank80},
while it is NP-complete for split graphs (and thus for chordal graphs)~\cite{YannakakisG87,CorneilF89}.
On the other hand, if $c$ is a fixed constant,
\textsc{Colorable Set} is polynomial-time solvable even for chordal graphs~\cite{YannakakisG87,CorneilF89}.

For perfect graphs,  the case of $c=1$ is solvable in polynomial time~\cite{GrotschelLS88},
while the case of $c=2$ is NP-complete~\cite{Addario-BerryKKLR10}.
Here we demonstrate that the problem is hard for any fixed $c$.
\begin{theorem}
\label{thm:cs_perfect}
\textsc{Colorable Set} is NP-complete on perfect graphs for every fixed $c \ge 2$.
\end{theorem}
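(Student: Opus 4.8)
The plan is the following. Membership in NP is immediate: a $c$-colorable set $S$ with $|S|\ge k$ is a polynomial-size certificate, since by Proposition~\ref{prop:colorable=noclique} one only needs to check that $G[S]$ has no clique larger than $c$, which is verifiable in polynomial time; alternatively one attaches a proper $c$-coloring of $G[S]$ to the certificate. For hardness I would reduce from the case $c=2$, which is NP-complete on perfect graphs by~\cite{Addario-BerryKKLR10}. Thus the only genuinely new content is the case $c\ge 3$, and it suffices to give a polynomial-time reduction that raises the number of available colors from $2$ to $c$.

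Given a perfect graph $G$ with $n:=|V(G)|$ and an integer $k$ (an instance of \textsc{Colorable Set} with $c=2$), set $p:=c-2\ge 1$ and $M:=n+1$, and let $H$ be the complete $p$-partite graph with all $p$ parts $I_1,\dots,I_p$ of size $M$. Form $G'$ by taking the join of $G$ and $H$, i.e.\ add to $G\cup H$ every edge between $V(G)$ and $V(H)$. Output the instance $(G',c,k')$ with $k':=k+pM$; this is computable in polynomial time for fixed $c$. To see that $G'$ is perfect, note that for any $W=S\cup T$ with $S\subseteq V(G)$ and $T\subseteq V(H)$, forming a join adds both clique numbers and chromatic numbers, so $\chi(G'[W])=\chi(G[S])+\chi(H[T])=\omega(G[S])+\omega(H[T])=\omega(G'[W])$, where we used that $G$ and $H$ (a complete multipartite graph) are perfect.

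For correctness I would argue as follows. If $S$ is a $2$-colorable set of $G$ with $|S|\ge k$, then $W:=S\cup V(H)$ satisfies $\omega(G'[W])=\omega(G[S])+p\le c$, so $W$ is $c$-colorable by Proposition~\ref{prop:colorable=noclique}, and $|W|=|S|+pM\ge k'$. Conversely, let $W=S\cup T$ be a $c$-colorable set of $G'$ with $|W|\ge k'$, and let $j$ be the number of parts $I_a$ that $T$ meets. Then $\omega(G[S])+j=\omega(G'[W])\le c$, while $|T|\le jM$ and $|S|\le n=M-1$. If $j\le p-1$ then $|W|\le (M-1)+(p-1)M<pM\le k'$, a contradiction; hence $j=p$, which forces $\omega(G[S])\le c-p=2$, so $S$ is $2$-colorable, and $|S|=|W|-|T|\ge k'-pM=k$.

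The one delicate point — and exactly what the padding parameter $M$ is built to handle — is why a largest $c$-colorable set of $G'$ must swallow all of $H$: discarding an entire part $I_a$ would free one color for use inside $G$, but each part carries $M=n+1$ vertices whereas the $G$-side can offer at most $n$, so this trade is never profitable. Beyond this bookkeeping I do not anticipate a real obstacle, and the remaining routine step is simply to recall (or cite) that complete multipartite graphs and joins of perfect graphs are perfect.
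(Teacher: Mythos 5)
Your proof is correct and follows essentially the same route as the paper: both reduce from the known hardness of the $c=2$ case (equivalently, \textsc{Odd Cycle Transversal}) on perfect graphs by joining $G$ with a perfect padding graph whose presence in any sufficiently large $c$-colorable set forces a $(c-2)$-clique, thereby capping the clique number on the $G$-side at $2$; the paper uses $n$ disjoint $(c-2)$-cliques where you use a complete $(c-2)$-partite graph with parts of size $n+1$, which is an immaterial difference. The one place you genuinely improve on the paper is the perfection of the join: your direct argument via additivity of $\chi$ and $\omega$ under joins is elementary, whereas the paper invokes the strong perfect graph theorem for this step.
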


In \cite{Addario-BerryKKLR10}, the problem actually studied was the dual of our problem.
An \emph{odd cycle} is a cycle of odd length.
An \emph{odd cycle transversal} $S \subseteq V$ of a graph $G = (V,E)$ is a set of vertices
that intersects every cycle of odd length in $G$. 
In other words, $S$ is an odd cycle transversal if and only if $G[V \setminus S]$ is bipartite.
They study the following problem of finding a small odd cycle transversal:
\begin{itemize}
  \setlength{\itemsep}{0pt}
  \item[] \textbf{Problem:}  \textsc{Odd Cycle Transversal} (OCT)
  \item[] \textbf{Input:} A graph $G$ and an integer $k$.
  \item[] \textbf{Question:} Does $G$ have an odd cycle transversal of size at most $k$?
\end{itemize}
\begin{proposition}
[\cite{Addario-BerryKKLR10}]
OCT is NP-complete for perfect graphs.
\end{proposition}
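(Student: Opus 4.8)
The plan is to establish the two standard components of NP-completeness, after first recasting the problem in a form adapted to perfect graphs. Membership in NP is immediate: a certificate is a candidate set $S \subseteq V$ with $|S| \le k$, and one checks in linear time (by attempting a $2$-coloring via breadth-first search) that $G[V \setminus S]$ is bipartite, which is exactly the condition that $S$ is an odd cycle transversal. Before attacking hardness, I would reformulate OCT on perfect graphs as a \emph{triangle transversal} problem. By Proposition~\ref{prop:colorable=noclique}, a set $T$ is $2$-colorable if and only if $G[T]$ has no clique of size more than $2$, i.e.\ if and only if $G[T]$ is triangle-free; and since a perfect graph contains no induced odd cycle of length at least $5$, an induced subgraph of $G$ is bipartite exactly when it is triangle-free. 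Taking $T = V \setminus S$, this shows that $S$ is an odd cycle transversal of a perfect graph $G$ if and only if $S$ meets every triangle of $G$. Thus on perfect graphs OCT is precisely the problem of finding a minimum vertex set hitting all triangles, and it suffices to exhibit a family of perfect graphs on which this triangle-hitting problem is NP-hard.

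For the hardness reduction I would start from a canonical NP-complete problem whose constraints are naturally ``triangle-shaped,'' such as \textsc{$3$-SAT} or vertex cover in $3$-uniform hypergraphs (equivalently \textsc{$3$-Hitting Set}), and encode its constraints by triangles that must be destroyed. Concretely, each element or variable becomes a vertex (or a small constant-size module), each constraint is realized by a gadget whose only triangles can be hit exactly by the deletions corresponding to satisfying or covering that constraint, and the deletion budget $k$ is chosen so that a transversal of size $k$ exists if and only if the source instance is a yes-instance. Because only triangles matter, the gadgets can be kept local and of bounded size, so that the correspondence between small transversals and satisfying assignments (or hitting sets) is routine to verify.

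The hard part will be guaranteeing that the constructed graph $G$ is genuinely \emph{perfect}. The reformulation above is valid only for perfect inputs, so the reduction is worthless unless $G$ avoids every odd hole and (by the Strong Perfect Graph Theorem) every odd antihole. This is a real constraint rather than a formality: the naive encoding that simply places one triangle on the three shared vertices of each triple can create an induced $C_{5}$---five triples arranged cyclically, each contributing a single edge of the pentagon while their third vertices are private---which destroys perfection. I would therefore design the gadgets and their interconnections so that no induced odd hole or antihole can arise, most safely by assembling $G$ from small perfect modules using operations known to preserve perfection (such as substitution or clique composition), or by arranging the whole construction inside a known perfect superclass, and then verify perfection either by that closure argument or directly via a structural odd-hole/antihole-free check. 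Once perfection is secured, combining it with the triangle-transversal reformulation yields the reduction and completes the NP-completeness proof.
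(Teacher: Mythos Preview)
The paper does not prove this proposition at all: it is stated with a citation to \cite{Addario-BerryKKLR10} and taken as a known result, so there is no ``paper's own proof'' to compare your attempt against.

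As for your proposal itself, it is an outline rather than a proof. The reformulation step is fine: on a perfect graph, $2$-colorable is the same as triangle-free (Proposition~\ref{prop:colorable=noclique}), so OCT on perfect graphs is exactly the minimum triangle transversal problem, and NP membership is trivial. But the hardness part is where all the content lies, and you do not actually carry it out. You correctly identify the obstacle---any gadget construction must avoid creating odd holes and odd antiholes---and you correctly note that the naive ``one triangle per triple'' encoding can produce an induced $C_{5}$. However, saying you ``would therefore design the gadgets \dots\ so that no induced odd hole or antihole can arise'' and listing general techniques (substitution, clique composition, embedding in a perfect superclass) is not a proof; it is a statement of intent. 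The entire difficulty of the result in \cite{Addario-BerryKKLR10} is precisely the concrete construction that simultaneously encodes the source instance and certifies perfection, and that construction is absent here. Until you exhibit a specific reduction and verify (or invoke a closure property that directly applies to it) that the output graph is perfect, the argument has a genuine gap.
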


The \emph{join} of two disjoint graphs $G = (V_{G}, E_{G})$ and $H = (V_{H}, E_{H})$
is the graph $G \oplus H = (V_{G} \cup V_{H}, E_{G} \cup E_{H} \cup \{\{u, v\} \mid u \in V_{G}, v \in V_{H}\})$.
That is, $G \oplus H$ is obtained from the disjoint union of $G$ and $H$ by adding all possible edges between $V_{G}$ and $V_{H}$.
\begin{lemma}
\label{lem:perfect_join}
The class of perfect graphs is closed under join.
That is, if two disjoint graphs are perfect, then so is their join.
\end{lemma}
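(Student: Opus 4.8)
The plan is to use the elementary characterization of perfection directly: a graph is perfect if and only if $\chi(F) = \omega(F)$ for every induced subgraph $F$, where $\chi$ and $\omega$ denote the chromatic number and the maximum clique size. First I would observe that every induced subgraph of $G \oplus H$ has the form $G[A] \oplus H[B]$ for some $A \subseteq V_{G}$ and $B \subseteq V_{H}$, and conversely each such graph is an induced subgraph of $G \oplus H$. Hence it suffices to show $\chi(G[A] \oplus H[B]) = \omega(G[A] \oplus H[B])$ for all such $A$ and $B$, using that $G[A]$ and $H[B]$ are perfect, being induced subgraphs of the perfect graphs $G$ and $H$.

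The key step is the additivity of both parameters over the join. For the clique number: any clique of $G' \oplus H'$ meets $V_{G'}$ in a clique of $G'$ and $V_{H'}$ in a clique of $H'$, and conversely the union of a clique of $G'$ and a clique of $H'$ is a clique of $G' \oplus H'$ because all edges between the two sides are present; therefore $\omega(G' \oplus H') = \omega(G') + \omega(H')$. For the chromatic number: combining an optimal proper coloring of $G'$ with an optimal proper coloring of $H'$ drawn from a disjoint palette shows $\chi(G' \oplus H') \le \chi(G') + \chi(H')$; for the reverse inequality, in any proper coloring of $G' \oplus H'$ no color may appear on both sides, since every vertex of $V_{G'}$ is adjacent to every vertex of $V_{H'}$, so the restrictions to the two sides use disjoint color sets of sizes at least $\chi(G')$ and $\chi(H')$ respectively. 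Combining these identities with $\chi(G[A]) = \omega(G[A])$ and $\chi(H[B]) = \omega(H[B])$ yields $\chi(G[A] \oplus H[B]) = \omega(G[A] \oplus H[B])$, and the lemma follows.

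I do not expect a genuine obstacle here; the only points needing (minor) care are the degenerate cases $A = \emptyset$ or $B = \emptyset$, where the claimed additivity still holds under the convention $\chi(\emptyset) = \omega(\emptyset) = 0$, and a careful statement of the disjoint-palette argument for $\chi$. An alternative route would invoke the Strong Perfect Graph Theorem: any odd hole or odd antihole of $G \oplus H$ must lie entirely inside $G$ or inside $H$ — a chordless cycle of length at least five cannot use two vertices from each side (each such vertex is then adjacent to too many cycle vertices), and the complement of a join is a disjoint union, so a connected antihole stays on one side — contradicting perfection of $G$ or $H$. The direct $\chi = \omega$ argument is, however, shorter and self-contained, so that is the one I would write up.
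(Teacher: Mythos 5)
Your proof is correct, and it takes a genuinely different route from the one in the paper. You argue directly from the definition of perfection: every induced subgraph of $G \oplus H$ is of the form $G[A] \oplus H[B]$, and both the clique number and the chromatic number are additive over the join (cliques on the two sides combine freely because all cross edges are present, and no color can be reused across the join for the same reason), so $\chi = \omega$ propagates from $G[A]$ and $H[B]$ to $G[A] \oplus H[B]$. The paper instead invokes the Strong Perfect Graph Theorem and checks that neither $G \oplus H$ nor its complement contains an odd hole: an odd hole meeting both sides would have a vertex of degree at least $3$ since one side carries at least three of its (at least five) vertices, and the complement of the join is the disjoint union of the complements, which are perfect by the weak perfect graph theorem. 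Your ``alternative route'' paragraph is essentially the paper's proof. The trade-off is clear: your primary argument is elementary and self-contained, needing nothing beyond the definition of perfect graphs, whereas the paper's is marginally shorter to write but rests on the full weight of the Strong Perfect Graph Theorem (plus the Lov\'asz complementation theorem). Your attention to the degenerate cases $A = \emptyset$ or $B = \emptyset$ is appropriate and handled correctly by the convention $\chi(\emptyset) = \omega(\emptyset) = 0$. Either proof would be acceptable here; yours is arguably preferable precisely because it avoids a deep black box.
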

\begin{proof}
An \emph{odd hole} is an induced odd cycle of length at least 5.
Let $G$ and $H$ be disjoint perfect graphs.
By the strong perfect graph theorem~\cite{ChudnovskyRST06},
it suffices to show that none of $G \oplus H$ and 
its complement $\overline{G \oplus H}$ contains an odd hole.

Suppose to the contrary that $G \oplus H$ contains an odd hole $C$.
Since $G$ and $H$ are perfect, $C$ intersects both $G$ and $H$. 
Moreover, as $|V(C)| \ge 5$, one of $G$ and $H$ has at least three vertices in $C$.
Therefore, $(G \oplus H)[V(C)]$ contains a vertex of degree at least 3.
This contradicts that $C$ is an induced cycle and $(G \oplus H)[V(C)] = C$.

Observe that $\overline{G \oplus H}$ is the disjoint union of 
the complements $\overline{G}$ of $G$ and $\overline{H}$ of $H$.
By the (weak) perfect graph theorem~\cite{Lovasz72},
$\overline{G}$ and $\overline{H}$ are perfect and thus have no odd hole.
Hence, $\overline{G \oplus H}$ has no odd hole.
\end{proof}

Now we are ready for proving the main claim of this section.
\begin{proof}
[Proof of Theorem~\ref{thm:cs_perfect}]
Let $G$ be an instance of OCT for perfect graphs.
Let $H$ be the disjoint union of $n := |V(G)|$ cliques of size $c-2$.
Let $G' = G \oplus H$. 
By Lemma~\ref{lem:perfect_join}, $G'$ is perfect (as $H$ is clearly perfect).
Now it suffices to show that for any $k < n$, $G$ has an odd cycle transversal of size $k$
if and only if $G'$ has a $c$-colorable set of size $|V(G')| - k$.
 
To show the only-if part, let $S$ be an odd cycle transversal of size $k$ in $G$.
Let $T = V(G') \setminus S$. Clearly, $|T| = |V(G')| - k$.
Since $G[V(G) \setminus S]$ contains no clique of size 3, the maximum clique size of $G'[T]$ is at most $c$.
Since $G'$ is perfect, $T$ is $c$-colorable.

To show the if part, let $T$ be a $c$-colorable set of size $|V(G')| - k$ in $G'$.
Observe that $T$ contains at least one entire clique $K$ of size $c-2$ in $H$ since otherwise $|T| \le |V(G')| - n < |V(G')| - k$.
Let $S = V(G') \setminus T$. 
Suppose to the contrary that $G[V(G) \setminus S]$ is not bipartite and thus contains a clique $K'$ of size 3.
By the definition, $K' \subseteq T \cap V(G)$.
Thus, $T$ contains the clique $K \cup K'$ of size $c+1$. This contradicts the $c$-colorability of $T$.
\end{proof}


\section{Shortest reconfiguration in interval graphs}
\label{sec:interval}
In this section, we show that $\CSR_{\TARrule}$ for interval graphs can be solved in linear time.
Our result is actually stronger and says that an actual shortest $\TAR{k}$-sequence can be found in linear time, if any.
By Lemma~\ref{lem:TJ=TAR}, the same result is obtained for $\TJ$.
We first give a characterization of the distance between two $c$-colorable sets in an interval graph (Section~\ref{ssec:int-dist}).
This characterization says that a shortest $\TAR{k}$-sequence has length linear in the number of vertices of the graph.
We then show that the distance can be computed in linear time (Section~\ref{ssec:int-dist-comp}).
We finally present a linear-time algorithm for finding a shortest $\TAR{k}$-sequence (Section~\ref{ssec:int-seq}).

It is known that a graph is an interval graph if and only if its maximal cliques can be ordered so that
each vertex appears consecutively in that ordering~\cite{GilmoreH64,FulkersonG65}.
We call a list of the maximal cliques ordered in such a way a \emph{clique path}.
Let $G = (V,E)$ be an interval graph and $(M_{1}, \dots, M_{t})$ be a clique path of $G$;
that is, for each vertex $v \in V$, there are indices $l_{v}$ and $r_{v}$ such that $v \in M_{i}$
if and only if $l_{v} \le i \le r_{v}$. 
Given an interval graph, a clique path and the indices $l_{v}$ and $r_{v}$ for all vertices can be computed in linear time~\cite{UeharaU07}.
Hence we can assume that we are additionally given such information.
Note that $\mathcal{I} = \{[l_{v}, r_{v}] \mid v \in V\}$ is an interval representation of $G$.
Namely, $\{u,v\} \in E$ if and only $[l_{u}, r_{u}] \cap [l_{v}, r_{v}] \ne \emptyset$.

Let $K$ be a clique in an interval graph $G$.
By the Helly property of intervals, the intersection of all intervals in $K$ is nonempty;
that is,  $\bigcap_{v \in K} [l_{v}, r_{v}] \ne \emptyset$ (see \cite{Spinrad03}).
A point in the intersection $\bigcap_{v \in K} [l_{v}, r_{v}]$ is a \emph{clique point} of $K$.


\subsection{The distance between $c$-colorable sets}
\label{ssec:int-dist}

Let $(G, c, S, S', k)$ be an instance of $\CSR_\TARrule$.
The set $S$ is \emph{locked in $G$}
if $S$ is a maximal $c$-colorable set in $G$ and $|S| = k$.
The following lemma follows immediately from the definition.
\begin{lemma}
\label{lem:interval-unreachable} 
Let $G$ be a graph, 
and let $S$ and $S'$ be distinct $c$-colorable sets of size at least $k$ in $G$.
If $S$ or $S'$ is locked in $G$, then $S \notsevstep S'$.
\end{lemma}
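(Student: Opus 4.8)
The plan is to unpack the definition of ``locked'' and observe that it leaves no legal move available under $\TAR{k}$. Suppose without loss of generality that $S$ is locked in $G$ (the case where $S'$ is locked follows by the reversibility of reconfiguration sequences noted just after the definition of $\sevstep$ under $\TAR{k}$). By definition, $|S| = k$ and $S$ is a maximal $c$-colorable set. Consider any $c$-colorable set $T$ with $S \onestep T$ under $\TAR{k}$: we must have $|S \symdiff T| = 1$ and $|T| \ge k$. Since $|S| = k$, the move cannot remove a vertex (that would give $|T| = k-1 < k$), so $T = S + v$ for some $v \notin S$. But then $T$ is a $c$-colorable set properly containing $S$, contradicting the maximality of $S$. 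Hence no set $T \neq S$ satisfies $S \onestep T$ under $\TAR{k}$.

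The conclusion is then immediate: the only $\TAR{k}$-sequence starting at $S$ is the trivial sequence $\langle S \rangle$, so the only $c$-colorable set reachable from $S$ is $S$ itself. Since $S \ne S'$ by hypothesis, we get $S \notsevstep S'$. If instead $S'$ is the locked set, the same argument shows nothing other than $S'$ is reachable from $S'$; by reversibility $S \sevstep S'$ would imply $S' \sevstep S$, again a contradiction.

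There is essentially no obstacle here — this is a definitional observation, which is presumably why the authors flag it as following ``immediately from the definition.'' The one point requiring a moment's care is that the $\TAR{k}$ rule is genuinely symmetric in the two sets, so it suffices to treat the case that $S$ is locked and invoke reversibility for the other case; I would state that explicitly rather than write out a near-identical second paragraph. I would also be slightly careful to note that $k \ge 0$ and that the hypothesis guarantees $|S|, |S'| \ge k$, so that $S$ and $S'$ are themselves legitimate members of any $\TAR{k}$-sequence — this is needed only to make the statement ``$S \notsevstep S'$'' meaningful rather than vacuous.
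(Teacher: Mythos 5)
Your proof is correct and matches the paper's argument: the paper likewise assumes without loss of generality that $S$ is locked, observes that any $c$-colorable $S_{1}$ with $S \onestep S_{1}$ must satisfy $S \subsetneq S_{1}$ because $|S| = k$ forbids removals, and derives a contradiction with maximality. Your version just spells out the same steps in slightly more detail.
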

\begin{proof}
Assume without loss of generality that $S$ is locked in $G$.
If there is a $c$-colorable set $S_{1}$ in $G$ such that $S \onestep S_{1}$,
then $S \subsetneq S_{1}$ as $|S| = k$. This contradicts the maximality of $S$.
Since $S \ne S'$, we can conclude that $S \notsevstep S'$.
\end{proof}

The rest of this subsection is dedicated to a proof of the following theorem,
which implies that the converse of the lemma above also holds for interval graphs.

\begin{theorem}
\label{thm:dist-interval}
Let $G$ be an interval graph, 
and let $S$ and $S'$ be distinct $c$-colorable sets of size at least $k$ in $G$.
If $S$ and $S'$ are not locked in $G$,
then the distance $d := \distTARG{G}{c}{S}{S'}{k}$ is determined as follows.
\begin{enumerate}
  \item If $S$ and $S'$ are not locked in $G[S \cup S']$, then $d = |\symdiff{S}{S'}|$. 
  \label{itm:interval_nolocked}
  
  \item If exactly one of $S$ and $S'$ is locked in $G[S \cup S']$, then $d = |\symdiff{S}{S'}| + 2$.
  \label{itm:interval_onelocked}

  \item If $S$ and $S'$ are locked in $G[S \cup S']$, then we have the following two cases.
  \label{itm:interval_twolocked}
  \begin{enumerate}
    \item If there is $v \in V(G) \setminus (S \cup S')$ such that both $S + v$ and $S' + v$ are $c$-colorable in $G$,
    then $d = |\symdiff{S}{S'}| + 2$.
    \label{itm:interval_twolocked_common}

    \item Otherwise, 
    $d = |\symdiff{S}{S'}| + 4$.
    \label{itm:interval_twolocked_no_common}
  \end{enumerate}
 \end{enumerate}
\end{theorem}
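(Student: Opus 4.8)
The plan is to prove matching upper and lower bounds for each case. The lower bounds are largely forced by counting and by Lemma~\ref{lem:interval-unreachable}: since every single step changes the symmetric difference with the target by exactly one, any $\TAR{k}$-sequence from $S$ to $S'$ has length at least $|\symdiff{S}{S'}|$, and this is attained only if we never add a vertex outside $S \cup S'$ and never remove a vertex of $S \cap S'$; equivalently, only if we can route entirely through $c$-colorable subsets of $S \cup S'$. If $S$ is locked in $G[S\cup S']$ then the very first step inside $G[S\cup S']$ is impossible, so any sequence of length $|\symdiff{S}{S'}|$ must use a vertex of $V(G)\setminus(S\cup S')$; a short argument shows this forces at least two extra steps (one add, one remove), giving the $+2$ in cases \ref{itm:interval_onelocked} and \ref{itm:interval_twolocked_common}. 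In case \ref{itm:interval_twolocked_no_common}, both $S$ and $S'$ are locked in $G[S\cup S']$ and no single outside vertex $v$ works for both, so I would argue that the ``escape'' vertex used near $S$ and the one used near $S'$ must be distinct and each contributes an add and a remove, forcing $+4$.

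The substance is in the upper bounds, and this is where the interval structure is essential. The key lemma I would isolate first: \emph{if $T$ is a $c$-colorable set in an interval graph $G$ that is not locked in $G$, and $w\in T$ is a vertex, then there is a vertex $u\notin T$ (possibly $u=$ nothing is needed, but generically $u\ne w$) with $T-w+u$ being $c$-colorable and not locked}, or more usefully, a ``sweeping'' statement: given any non-locked $c$-colorable $T\subseteq S\cup S'$, one can reach $S'$ (when $S'$ is also non-locked in $G[S\cup S']$) by a sequence of adds and removes staying inside $S\cup S'$, of length exactly $|\symdiff{T}{S'}|$. I would prove this by induction on $|\symdiff{T}{S'}|$, at each step finding a vertex of $S'\setminus T$ to add or a vertex of $T\setminus S'$ to remove while preserving $c$-colorability. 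The crucial structural fact enabling such a move is that in an interval graph a $c$-colorable set corresponds, via the clique path $(M_1,\dots,M_t)$, to an interval system with no $(c{+}1)$-fold point; adding a vertex $v$ keeps $c$-colorability iff the interval $[l_v,r_v]$ avoids every point already covered $c$ times. So when $T$ is not maximal, there is slack somewhere, and one shows that either some vertex of $S'\setminus T$ can be added directly, or else some vertex of $T\setminus S'$ blocks it and can be removed first — and removal never destroys $c$-colorability. A careful leftmost/rightmost choice of which vertex to touch (ordering by $l_v$ or $r_v$) keeps the process moving toward $S'$ without getting stuck.

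For the cases where a set is locked in $G[S\cup S']$, the upper-bound constructions are: pick an outside vertex $v\in V(G)\setminus(S\cup S')$ that can be added to $S$ (this exists precisely because $S$ is \emph{not} locked in $G$ — maximality fails in $G$ even though it holds in $G[S\cup S']$); add $v$ to unlock, run the sweeping argument inside $S\cup S'\cup\{v\}$ toward $S'$, and remove $v$ at the end. In case \ref{itm:interval_twolocked_common} a single $v$ serves both ends; in case \ref{itm:interval_twolocked_no_common} one uses $v$ near $S$ and a different $v'$ near $S'$, splicing two sweeps, which is why the overhead doubles. One must check the sweep can be arranged so that $v$ (resp.\ $v'$) stays insertable throughout the relevant portion and that the intermediate sets never drop below $k$; since we only ever add a vertex to an already size-$\ge k$ set or remove after having added, size stays $\ge k$ — here it is important that $S$ and $S'$ being ``not locked in $G$'' is exactly the hypothesis preventing a genuine obstruction.

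\textbf{The main obstacle} I anticipate is the sweeping lemma's step where $T\subseteq S\cup S'$ is non-locked in $G[S\cup S']$ but locally ``tight'' — every vertex of $S'\setminus T$ is individually blocked by a fully-loaded clique point, and one must show some vertex of $T\setminus S'$ can be removed that creates room for progress toward $S'$ rather than away from it. Resolving this cleanly requires exploiting that $G[S\cup S']$ is itself an interval graph with the inherited clique path, and tracking, for each clique $M_i$, the counts $|S\cap M_i|$, $|S'\cap M_i|$, $|T\cap M_i|$; the inequality that must hold at every locked clique of $G[S\cup S']$ is $|S\cap M_i|=|S'\cap M_i|=c$, and one shows a leftmost violated column forces the desired removable vertex. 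I would also be careful that the $+2$ and $+4$ lower bounds genuinely rule out clever sequences that oscillate in and out of $S\cup S'$ more than twice; a potential-function or parity argument on the quantity ``(number of outside vertices currently in the set) $+$ (number of $S\cap S'$ vertices currently missing)'' should make this rigorous.
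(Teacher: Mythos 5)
Your overall architecture matches the paper's: counting lower bounds (each step changes $|\symdiff{T}{S'}|$ by exactly one, and a set locked in $G[S \cup S']$ forces the first step to leave $S \cup S'$, costing $+2$ per locked endpoint, with distinct escape vertices forcing $+4$), plus upper bounds obtained by first ``unlocking'' with an outside vertex and then reducing to the unlocked case, which is handled by induction on $|\symdiff{S}{S'}|$. Those parts of your plan are sound and essentially identical to the paper's Lemmas for cases (\ref{itm:interval_onelocked})--(\ref{itm:interval_twolocked_no_common}).

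The genuine gap is exactly the step you flag as ``the main obstacle'': the sweeping lemma for case (\ref{itm:interval_nolocked}) is asserted, not proved, and your sketch (``a leftmost violated column forces the desired removable vertex'') does not yet constitute an argument that the process makes progress of exactly one unit of symmetric difference per step. The paper closes this with a concrete exchange argument that you should reconstruct: when $|S|, |S'| > k$ and neither set contains the other, let $v \in S \setminus S'$ and $w \in S' \setminus S$ minimize the right endpoints $r_x$ in their respective sets, assume by symmetry $r_w \le r_v$, and let $u \in S \setminus S'$ minimize $l_u$ (so $r_w \le r_v \le r_u$). One \emph{removes $u$ first}, then adds $w$. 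The point is that $S - u + w$ cannot contain a $(c+1)$-clique $K$: such a $K$ would contain $w$ and have a clique point $p \le r_w$ by the Helly property; if $p < l_u$ then $K$ avoids all of $S \setminus S'$ and lies inside $S'$, contradicting $c$-colorability of $S'$; otherwise $p \in [l_u, r_u]$, so $K - w + u \subseteq S$ is a $(c+1)$-clique, contradicting $c$-colorability of $S$. This simultaneously shows the new set is not locked in the induced subgraph (it admits the addition of $w$), which is what keeps the induction alive. Two further points your plan glosses over: (i) removal is only legal when $|T| > k$, so the case $|T| = k$ must be handled separately --- there, ``not locked in $G[S \cup S']$'' means non-maximal in $G[S \cup S']$, so some vertex of $S' \setminus T$ can be \emph{added} directly; your sweep as stated (``remove a blocking vertex first'') would violate the size threshold here. (ii) Your heuristic ``add a vertex of $S' \setminus T$ if possible, else remove a blocker'' does not obviously terminate in $|\symdiff{T}{S'}|$ steps without the specific min-$l_u$ / min-$r_w$ pairing; an arbitrary choice of which blocker to remove can fail to create room for the intended addition. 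The $+2$/$+4$ lower bounds need no potential function beyond the observation that $|\symdiff{T}{S'}|$ changes by one per step and the forced first (and, by reversibility, last) step increases it.
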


\begin{corollary}
For $S \ne S'$, $S \sevstep S'$ if and only if none of $S$ and $S'$ is locked in $G$.
\end{corollary}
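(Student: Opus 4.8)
The plan is to obtain the corollary as an immediate consequence of Lemma~\ref{lem:interval-unreachable} and Theorem~\ref{thm:dist-interval}; no new construction is needed.

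For the ``only if'' direction I would argue contrapositively. Suppose $S$ or $S'$ is locked in $G$. Since we are working with a $\CSR_\TARrule$-instance we may assume $|S| \ge k$ and $|S'| \ge k$, and we are given $S \ne S'$; hence the hypotheses of Lemma~\ref{lem:interval-unreachable} hold verbatim, and it yields $S \notsevstep S'$.

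For the ``if'' direction, assume that neither $S$ nor $S'$ is locked in $G$. Then the hypothesis of Theorem~\ref{thm:dist-interval} is satisfied, so $d := \distTARG{G}{c}{S}{S'}{k}$ equals one of the quantities listed there, namely $|\symdiff{S}{S'}|$, or that quantity plus $2$, or plus $4$; in particular $d$ is finite. By the definition of $\distTARG{G}{c}{S}{S'}{k}$, a finite value certifies the existence of a $\TAR{k}$-sequence between $S$ and $S'$, i.e.\ $S \sevstep S'$. The only point worth checking is that the case distinction in Theorem~\ref{thm:dist-interval} is exhaustive, which it is: it branches on how many of $S$ and $S'$ are locked in $G[S \cup S']$ (zero, one, or two), and in the two-locked case on whether some $v \in V(G) \setminus (S \cup S')$ extends both $S$ and $S'$ to $c$-colorable sets.

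There is essentially no obstacle in the corollary itself; all the real work has already been done in the proof of Theorem~\ref{thm:dist-interval}, and the corollary just records its qualitative consequence for reachability.
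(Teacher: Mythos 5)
Your proof is correct and is exactly the intended derivation: the paper states the corollary without proof precisely because it follows immediately from Lemma~\ref{lem:interval-unreachable} (for the ``only if'' direction) and the finiteness of the distance in every case of Theorem~\ref{thm:dist-interval} (for the ``if'' direction). Your remark on the exhaustiveness of the case analysis is the only point that needed checking, and you handle it correctly.
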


Observe that $\distTARG{G}{c}{S}{S'}{k} \ge |\symdiff{S}{S'}|$ for any pair of $c$-colorable sets $S$ and $S'$ in $G$.
We use this fact implicitly in the following arguments.

\begin{lemma}[Theorem~\ref{thm:dist-interval}~(\ref{itm:interval_nolocked})]
\label{lem:interval_nolocked}
Let $G$ be an interval graph, 
and let $S$ and $S'$ be $c$-colorable sets of size at least $k$ in $G$.
If $S$ and $S'$ are not locked in $G[S \cup S']$,
then $\distTARG{G}{c}{S}{S'}{k} = |\symdiff{S}{S'}|$.
\end{lemma}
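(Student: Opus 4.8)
The lower bound $\distTARG{G}{c}{S}{S'}{k}\ge|\symdiff{S}{S'}|$ is immediate, since one $\TAR{k}$ step changes $|\symdiff{S_i}{S'}|$ by exactly one, so reaching $S'$ from $S$ needs at least $|\symdiff{S}{S'}|$ steps. The whole content is the matching upper bound, and my plan is to build a $\TAR{k}$-sequence from $S$ to $S'$ along which $|\symdiff{\cdot}{S'}|$ strictly decreases at every step; such a sequence automatically has length $|\symdiff{S}{S'}|$. First I would reduce to working inside $H:=G[S\cup S']$: such a sequence never leaves $S\cup S'$, a subset of $S\cup S'$ is $c$-colorable in $G$ iff it is $c$-colorable in $H$, and size and lockedness refer only to $H$, so a sequence produced inside $H$ is a valid $\TAR{k}$-sequence in $G$ of the same length. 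Throughout I use the standard fact that on an interval graph with clique path $(M_1,\dots,M_t)$ a set $T$ is $c$-colorable iff $|T\cap M_i|\le c$ for all $i$, since every clique lies in some maximal clique and the $M_i$ are exactly the maximal cliques.

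The argument then proceeds by induction on $|\symdiff{S}{S'}|$, the case $S=S'$ being trivial. For the inductive step I split into cases by which short prefix of moves keeps me inside the hypotheses. (a)~If some $a\in S'\setminus S$ has $S+a$ $c$-colorable, I add $a$: the union $S\cup S'$ is unchanged, $|S+a|\ge k+1$ so $S+a$ is not locked, $S'$ is still not locked, and $|\symdiff{\cdot}{S'}|$ has dropped, so induction finishes. (b)~Symmetrically, if some $u\in S\setminus S'$ has $S'+u$ $c$-colorable, I append the move from $S'+u$ to $S'$ at the very end and apply induction to $(H,c,S,S'+u,k)$; this is legitimate because reconfiguration is reversible. (c)~If neither holds, then, since $V(H)\setminus S=S'\setminus S$ and $V(H)\setminus S'=S\setminus S'$, both $S$ and $S'$ are \emph{maximal} $c$-colorable in $H$, and because neither is locked this forces $|S|\ge k+1$ and $|S'|\ge k+1$. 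If moreover $|S|\ge k+2$ (resp.\ $|S'|\ge k+2$), I delete one vertex of $S\setminus S'$ (resp.\ $S'\setminus S$): the resulting set still has size $\ge k+1\ne k$, hence is not locked, the other set is unchanged, $|\symdiff{\cdot}{S'}|$ has dropped, and I recurse on the interval graph obtained from $H$ by deleting that vertex. This leaves exactly the case $|S|=|S'|=k+1$ with $S$ and $S'$ both maximal $c$-colorable in $H$, which is where interval structure is genuinely needed.

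For that case I would prove an exchange lemma: if $S\ne S'$ are both maximal $c$-colorable in an interval graph $H$ with $V(H)=S\cup S'$, then there exist $r\in S\setminus S'$ and $a\in S'\setminus S$ such that $(S\setminus\{r\})\cup\{a\}$ is $c$-colorable. My plan for it: pick $x\in(S\setminus S')\cup(S'\setminus S)$ whose interval has the smallest right endpoint $\rho:=r_x$, and assume $x\in S'\setminus S$ (the other case is symmetric, swapping $S$ and $S'$). Maximality of $S$ gives a saturated clique $M_j$, i.e.\ one with $|S\cap M_j|=c$, containing $x$; let $M_p$ be the leftmost saturated clique containing $x$. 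A short count inside $M_p$ using $|S\cap M_p|=c$, $|S'\cap M_p|\le c$, and $x\in(S'\setminus S)\cap M_p$ gives $|(S\setminus S')\cap M_p|\ge|(S'\setminus S)\cap M_p|\ge 1$, so I fix $r\in(S\setminus S')\cap M_p$. Because $x$'s interval is leftmost among $(S\setminus S')\cup(S'\setminus S)$, the right endpoint of $r$ is at least $\rho$, while $r\in M_p$ means the interval of $r$ contains $p$. Every saturated clique containing $x$ has index in $[p,\rho]$ — at least $p$ by the choice of $p$, at most $\rho$ because $x\in M_i$ forces $i\le r_x=\rho$ — so the interval of $r$, reaching from at most $p$ to at least $\rho$, contains every such index; that is, $r$ lies in every saturated clique containing $x$. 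Then verifying $|((S\setminus\{r\})\cup\{x\})\cap M_i|\le c$ for every $i$ splits into the cases $x\notin M_i$; $x\in M_i$ with $|S\cap M_i|<c$; and $x\in M_i$ with $M_i$ saturated, in which case $r\in M_i$. Each case immediately gives $\le c$, so $(S\setminus\{r\})\cup\{x\}$ is $c$-colorable.

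Finally, given such $r$ and $a$, in the remaining case I perform the two moves from $S$ to $S\setminus\{r\}$ and then to $(S\setminus\{r\})\cup\{a\}$: both sets along the way are $c$-colorable of size $\ge k$ (namely $k$ and $k+1$), and $|\symdiff{\cdot}{S'}|$ has dropped by two. The new pair $((S\setminus\{r\})\cup\{a\},S')$ has union $(S\cup S')\setminus\{r\}$, and inside the interval graph $H-r$ both of its sets have size $k+1\ne k$, hence are not locked; the induction hypothesis then yields a sequence of length $|\symdiff{S}{S'}|-2$, so $\distTARG{G}{c}{S}{S'}{k}\le 2+(|\symdiff{S}{S'}|-2)=|\symdiff{S}{S'}|$, matching the lower bound. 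I expect the exchange lemma — equivalently, the $|S|=|S'|=k+1$ both-maximal case — to be the only genuine obstacle; the rest is bookkeeping about exactly when lockedness can reappear along the way.
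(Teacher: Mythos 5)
Your proof is correct and follows essentially the same route as the paper's: induction on $|\symdiff{S}{S'}|$, with the size-$k$ boundary handled via non-maximality and the core step being an interval exchange that swaps the vertex of $\symdiff{S}{S'}$ with minimum right endpoint into the other set. The differences are only organizational --- the paper performs the remove-then-add exchange whenever both sets exceed size $k$ (removing the vertex of $S \setminus S'$ with minimum left endpoint and arguing via clique points of a hypothetical $(c+1)$-clique), whereas you defer the exchange to the tight case $|S|=|S'|=k+1$ and prove it by counting inside the leftmost saturated maximal clique; both variants are sound.
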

\begin{proof}
We proceed by induction on $|\symdiff{S}{S'}|$.
The base case of $|\symdiff{S}{S'}| = 0$ is trivial.
Assume that $|\symdiff{S}{S'}| > 0$
and that the statement is true if the symmetric difference is  smaller.

We first consider the case where $|S| = k$.
Since $S$ is not locked in $G[S \cup S']$, $S$ is not maximal in $G[S \cup S']$.
Thus there is a vertex $v \in S' \setminus S$ such that $T := S + v$ is $c$-colorable.
The set $T$ is not locked in $G[S \cup S']$, $S \onestep T$, and $|\symdiff{T}{S'}| = |\symdiff{S}{S'}| - 1$.
By the induction hypothesis, $\distTARG{G}{c}{T}{S'}{k} = |\symdiff{T}{S'}| = |\symdiff{S}{S'}| - 1$.
Hence, we have $\distTARG{G}{c}{S}{S'}{k} \le \distTARG{G}{c}{T}{S'}{k} + 1 = |\symdiff{S}{S'}|$.
If $|S'| = k$, we can apply the same argument.

In the following, we assume that $|S| > k$ and $|S'| > k$.
If $S \subseteq S'$, then 
we can add the elements of $S' \setminus S$ one-by-one in an arbitrary order
to get a shortest reconfiguration sequence of length $|S' \setminus S| = |\symdiff{S}{S'}|$.
The case where $S' \subseteq S$ is the same.

We now consider the case where $S \not\subseteq S'$ and $S' \not\subseteq S$.
Let $v \in S \setminus S'$ and $w \in S' \setminus S$ be vertices with the smallest right-end in each set.
That is, $r_{v} = \min \{r_{x} \mid x \in S \setminus S'\}$
and $r_{w} = \min \{r_{x} \mid x \in S' \setminus S\}$.
By symmetry, assume that $r_{w} \le r_{v}$.
Let $u \in S \setminus S'$ be a vertex that minimizes $l_{u}$.
(Note that $u$ and $v$ may be the same.)
Now we have $r_{w} \le r_{v} \le r_{u}$.
We set $T = S - u$ and $T' = S -u + w$. Clearly, $S \onestep T$.
To apply induction hypothesis, it suffices to show that $T$ is not locked in $G[T \cup S']$.
To this end, we prove that $T \onestep T'$.
Suppose to the contrary that $T'$ is not $c$-colorable;
that is, $T'$ contains a clique $K$ of size $c+1$.
Since $T$ does not contain such a large clique, $K$ must include $w$.
Let $p$ be a clique point of $K$.
If $p < l_{u}$, then $K$ includes no vertex in $S \setminus S'$
as $u$ has the minimum $l_{u}$ in $S \setminus S'$.
This contradicts the $c$-colorability of $S'$
and thus $l_{u} \le p \le r_{w} \le r_{u}$.
This implies that $K - w + u \subseteq S$ is a clique of size $c+1$, a contradiction.
Therefore, we can conclude that $T'$ is $c$-colorable.
Now, by the induction hypothesis, $\distTARG{G}{c}{T}{S'}{k} = |\symdiff{T}{S'}| = |\symdiff{S}{S'}| - 1$,
and thus $\distTARG{G}{c}{S}{S'}{k} \le \distTARG{G}{c}{T}{S'}{k} + 1 = |\symdiff{S}{S'}|$.
\end{proof}

\begin{lemma}
[Theorem~\ref{thm:dist-interval}~(\ref{itm:interval_onelocked})]
\label{lem:interval_onelocked}
Let $G$ be an interval graph, 
and let $S$ and $S'$ be distinct $c$-colorable sets of size at least $k$ in $G$.
If $S$ and $S'$ are not locked in $G$,
and exactly one of $S$ and $S'$ is locked in $G[S \cup S']$, then $\distTARG{G}{c}{S}{S'}{k} = |\symdiff{S}{S'}| + 2$. 
\end{lemma}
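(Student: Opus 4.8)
The plan is to prove matching lower and upper bounds of $|\symdiff{S}{S'}| + 2$. By symmetry of both the statement and the distance function I may assume that it is $S$, not $S'$, that is locked in $G[S \cup S']$; thus $|S| = k$ and $S$ is a maximal $c$-colorable set of $G[S \cup S']$. Since $S$ is not locked in $G$ yet has size $k$, it is not a maximal $c$-colorable set of $G$, so there is a vertex $v$ with $S + v$ $c$-colorable in $G$. As the $c$-colorability of a vertex set depends only on the subgraph it induces, such a $v$ cannot lie in $S' \setminus S$ (otherwise $S + v \subseteq S \cup S'$ would contradict the maximality of $S$ in $G[S \cup S']$), so in fact $v \in V(G) \setminus (S \cup S')$. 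I fix one such $v$.

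For the lower bound I consider an arbitrary $\TAR{k}$-sequence $\langle S = S_0, S_1, \dots, S_\ell = S' \rangle$. Since $|S_0| = k$, the only legal first move is an addition, so $S_1 = S + v'$ for some $v' \notin S$; arguing exactly as above, the maximality of $S$ in $G[S \cup S']$ forces $v' \notin S'$, hence $v' \notin S \cup S'$ and $|\symdiff{S_1}{S'}| = |\symdiff{S}{S'}| + 1$. Each of the remaining $\ell - 1$ steps changes the symmetric difference with $S'$ by at most one, so $\ell - 1 \ge |\symdiff{S_1}{S'}| = |\symdiff{S}{S'}| + 1$, that is, $\ell \ge |\symdiff{S}{S'}| + 2$. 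Since this holds for every $\TAR{k}$-sequence, $\distTARG{G}{c}{S}{S'}{k} \ge |\symdiff{S}{S'}| + 2$.

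For the upper bound I start with the single move $S \onestep T$, where $T := S + v$ is $c$-colorable in $G$ and $|T| = k + 1$, and then route from $T$ to $S'$ using Lemma~\ref{lem:interval_nolocked}. To apply that lemma I must check that neither $T$ nor $S'$ is locked in $G[T \cup S']$. The set $T$ is not, because $|T| = k + 1 \ne k$. The set $S'$ is not, because $T \cup S' \supseteq S \cup S'$ and, by hypothesis, $S'$ is not locked in $G[S \cup S']$: either $|S'| \ne k$, which still holds in $G[T \cup S']$, or there is a vertex of $(S \cup S') \setminus S'$ whose addition to $S'$ preserves $c$-colorability, and since $(S \cup S') \setminus S' = S \setminus S' \subseteq T \cup S'$ that vertex still witnesses the non-maximality of $S'$ in $G[T \cup S']$. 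Lemma~\ref{lem:interval_nolocked} then gives $\distTARG{G}{c}{T}{S'}{k} = |\symdiff{T}{S'}|$, and since $v \notin S \cup S'$ we have $|\symdiff{T}{S'}| = |\symdiff{S}{S'}| + 1$; combining, $\distTARG{G}{c}{S}{S'}{k} \le 1 + |\symdiff{T}{S'}| = |\symdiff{S}{S'}| + 2$, which with the lower bound proves the claim.

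The only delicate part is the bookkeeping around the notion ``locked'': one must check that passing from $G[S \cup S']$ to the larger $G[T \cup S']$ cannot turn the unlocked set $S'$ into a locked one, and that the ``spare'' vertex $v$ really lies outside $S \cup S'$ — the latter being exactly what forces the two extra steps in the lower bound and, simultaneously, furnishes the opening move in the upper bound. I do not anticipate a genuine obstacle here: all the substantive routing is delegated to Lemma~\ref{lem:interval_nolocked}, and everything else rests only on the observation that a single $\TAR$ move changes any symmetric difference by exactly one.
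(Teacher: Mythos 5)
Your proof is correct and follows essentially the same route as the paper's: locate a vertex $v \notin S \cup S'$ with $S+v$ $c$-colorable (which exists because $S$ is locked in $G[S\cup S']$ but not in $G$), use it for the opening move and delegate the rest to Lemma~\ref{lem:interval_nolocked} for the upper bound, and observe that the forced first addition must leave $S \cup S'$ for the lower bound. You are somewhat more explicit than the paper in verifying that $T$ and $S'$ remain unlocked in $G[T \cup S']$, but the argument is the same.
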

\begin{proof}
Without loss of generality, assume that $S$ is locked in $G[S \cup S']$.
This implies that $|S| = k$.
Since $S$ is not locked in $G$, $S$ is not maximal in $G$.
Hence, there is a vertex $v \in V(G) \setminus (S \cup S')$
such that $T := S + v$ is a $c$-colorable set of $G$.
Observe that $T$ and $S'$ are not locked in $G[T \cup S']$.
Thus, by Theorem~\ref{thm:dist-interval}~(\ref{itm:interval_nolocked}), it holds that
$\distTARG{G}{c}{S}{S'}{k} \le \distTARG{G}{c}{T}{S'}{k} + 1 = |\symdiff{T}{S'}| + 1 = |\symdiff{S}{S'}| + 2$.

On the other hand, 
since $S$ is locked in $G[S \cup S']$,
every $c$-colorable set $T$ of $G$ with $S \onestep T$
contains a vertex in $V(G) \setminus (S \cup S')$.
Thus $|\symdiff{T}{S'}| = |\symdiff{S}{S'}|+1$ holds.
This implies that
$\distTARG{G}{c}{S}{S'}{k} \ge \min_{T \colon S \onestep T} |\symdiff{T}{S'}| + 1 = |\symdiff{S}{S'}|+2$.
\end{proof}

\begin{lemma}
[Theorem~\ref{thm:dist-interval}~(\ref{itm:interval_twolocked_common})]
\label{lem:interval_twolocked_common}
Let $G$ be an interval graph, 
and let $S$ and $S'$ be distinct $c$-colorable sets of size at least $k$ in $G$.
Assume $S$ and $S'$ are locked in $G[S \cup S']$ but not in $G$.
If there is a vertex $v \in V(G) \setminus (S \cup S')$ such that both $S + v$ and $S' + v$ are $c$-colorable in $G$,
then $\distTARG{G}{c}{S}{S'}{k} = |\symdiff{S}{S'}| + 2$.
\end{lemma}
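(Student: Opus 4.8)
The plan is to bound $\distTARG{G}{c}{S}{S'}{k}$ from both sides by $|\symdiff{S}{S'}| + 2$.

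For the upper bound, I would use the common vertex $v \in V(G) \setminus (S \cup S')$ given by the hypothesis. Since $S + v$ is $c$-colorable in $G$ and, crucially, $v \notin S \cup S'$, the set $T := S + v$ satisfies $S \onestep T$ and $|\symdiff{T}{S'}| = |\symdiff{S}{S'}| + 1$. The key claim is that $T$ and $S'$ are not locked in $G[T \cup S']$: indeed $S'$ is not locked there because $S' + v \subseteq T \cup S'$ is $c$-colorable in $G$ (hence in the induced subgraph) and $v \notin S'$, so $S'$ is not maximal in $G[T \cup S']$; and $T$ is not locked there simply because $|T| = |S| + 1 > k$ (recall $S$ locked in $G[S\cup S']$ forces $|S| = k$), so the size condition for being locked fails. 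Then by Theorem~\ref{thm:dist-interval}~(\ref{itm:interval_nolocked}) (Lemma~\ref{lem:interval_nolocked}), $\distTARG{G}{c}{T}{S'}{k} = |\symdiff{T}{S'}| = |\symdiff{S}{S'}| + 1$, and combining with $S \onestep T$ gives $\distTARG{G}{c}{S}{S'}{k} \le |\symdiff{S}{S'}| + 2$.

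For the lower bound, the argument mirrors the one in Lemma~\ref{lem:interval_onelocked}. Since $S$ is locked in $G[S \cup S']$, every $c$-colorable set $T$ with $S \onestep T$ must be obtained by adding a vertex outside $S \cup S'$ (we cannot add a vertex of $S' \setminus S$ by maximality of $S$ in $G[S \cup S']$, and we cannot remove one because $|S| = k$). Hence any such $T$ has $|\symdiff{T}{S'}| = |\symdiff{S}{S'}| + 1$, giving $\distTARG{G}{c}{S}{S'}{k} \ge |\symdiff{S}{S'}| + 2$. Symmetrically, using that $S'$ is locked in $G[S \cup S']$, the same bound follows from the other end, but one application suffices.

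The routine part is the $|\symdiff{T}{S'}|$ bookkeeping; the one point that needs care is checking that $T$ and $S'$ are genuinely \emph{not locked} in $G[T \cup S']$ so that part~(\ref{itm:interval_nolocked}) applies — this is where the hypothesis that the common vertex $v$ can be added to \emph{both} $S$ and $S'$ is used, and it is the only real obstacle. Note there is no genuine circularity in invoking Theorem~\ref{thm:dist-interval}~(\ref{itm:interval_nolocked}): that case was already established independently in Lemma~\ref{lem:interval_nolocked}, whose proof used only induction on the symmetric difference.
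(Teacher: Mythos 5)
Your proof is correct and follows essentially the same strategy as the paper's: use the common addable vertex $v$ to escape the locked configuration, invoke the already-established case~(\ref{itm:interval_nolocked}) for the upper bound, and reuse the lower-bound argument from Lemma~\ref{lem:interval_onelocked}. The only cosmetic difference is that the paper adds $v$ to \emph{both} sides and applies case~(\ref{itm:interval_nolocked}) to the pair $(S+v,\,S'+v)$, whereas you add $v$ only to $S$ and note that $S'$ is then unlocked in $G[(S+v)\cup S']$ precisely because $v$ is addable to $S'$; both accountings yield $|\symdiff{S}{S'}|+2$.
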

\begin{proof}
Let $v \in V(G) \setminus (S \cup S')$ be a vertex such that both $S + v$ and $S' + v$ are $c$-colorable in $G$.
We have $S \onestep S + v$ and $S' \onestep S' + v$.
Since $S + v$ and $S' + v$ are not locked in $G[S \cup S' + v]$,
Theorem~\ref{thm:dist-interval}~(\ref{itm:interval_nolocked}) implies that
$\distTARG{G}{c}{S}{S'}{k} \le \distTARG{G}{c}{S+v}{S'+v}{k} + 2 = |\symdiff{(S+v)}{(S'+v)}| + 2 = |\symdiff{S}{S'}| + 2$.

The lower bound $\distTARG{G}{c}{S}{S'}{k} \ge |\symdiff{S}{S'}|+2$
can be shown in exactly the same way as in the proof of Lemma~\ref{lem:interval_onelocked}.
\end{proof}

\begin{lemma}
[Theorem~\ref{thm:dist-interval}~(\ref{itm:interval_twolocked_no_common})]
\label{lem:interval_twolocked_no_common}
Let $G$ be an interval graph, 
and let $S$ and $S'$ be distinct $c$-colorable sets of size at least $k$ in $G$.
Assume $S$ and $S'$ are locked in $G[S \cup S']$ but not in $G$.
If there is no vertex $v \in V(G) \setminus (S \cup S')$ such that both $S + v$ and $S' + v$ are $c$-colorable in $G$,
then $\distTARG{G}{c}{S}{S'}{k} = |\symdiff{S}{S'}| + 4$.
\end{lemma}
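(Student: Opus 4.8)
The plan is to establish the matching upper and lower bounds $\distTARG{G}{c}{S}{S'}{k} \le |\symdiff{S}{S'}| + 4$ and $\distTARG{G}{c}{S}{S'}{k} \ge |\symdiff{S}{S'}| + 4$ separately.

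For the upper bound, I would first extract two witness vertices. Since $S$ is locked in $G[S \cup S']$ we have $|S| = k$, and since $S$ is \emph{not} locked in $G$ it is not a maximal $c$-colorable set of $G$, so some vertex $v$ makes $S+v$ $c$-colorable in $G$; because $S$ is maximal $c$-colorable in $G[S \cup S']$, no vertex of $S' \setminus S$ has this property, hence $v \in V(G) \setminus (S \cup S')$. Symmetrically pick $v' \in V(G) \setminus (S \cup S')$ with $S'+v'$ $c$-colorable in $G$. The hypothesis of the lemma forces $v \ne v'$, since $v = v'$ would be a common witness. Then $S \onestep S+v$ and $S'+v' \onestep S'$, and the sets $S+v$ and $S'+v'$ both have size $k+1$, so neither is locked in any induced subgraph; thus Theorem~\ref{thm:dist-interval}~(\ref{itm:interval_nolocked}) applies and gives $\distTARG{G}{c}{S+v}{S'+v'}{k} = |\symdiff{(S+v)}{(S'+v')}|$. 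Because $v,v' \notin S \cup S'$ and $v \ne v'$, this equals $|\symdiff{S}{S'}| + 2$, and prepending and appending the two single-vertex moves yields $\distTARG{G}{c}{S}{S'}{k} \le |\symdiff{S}{S'}| + 4$.

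For the lower bound, take any $\TAR{k}$-sequence $\langle S = S_{0}, S_{1}, \dots, S_{\ell} = S' \rangle$. Exactly as in the proof of Lemma~\ref{lem:interval_onelocked}, since $|S| = k$ the first step cannot be a deletion, and since $S$ is maximal $c$-colorable in $G[S \cup S']$ the vertex added in the first step cannot lie in $S' \setminus S$; hence $S_{1} = S+w$ with $w \in V(G) \setminus (S \cup S')$. By the same argument applied to $S'$ (using reversibility), $S_{\ell-1} = S'+w'$ with $w' \in V(G) \setminus (S \cup S')$. The key point is that $w \ne w'$: otherwise the subsequence $\langle S_{1}, \dots, S_{\ell-1} \rangle$ would exhibit $S+w$ and $S'+w$ as $c$-colorable sets of $G$ with $w \notin S \cup S'$, contradicting the hypothesis. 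Given $w \ne w'$, the sets $S_{1}$ and $S_{\ell-1}$ differ precisely on $\symdiff{S}{S'}$ together with $\{w, w'\}$, so $|\symdiff{S_{1}}{S_{\ell-1}}| = |\symdiff{S}{S'}| + 2$. Since the subsequence from $S_{1}$ to $S_{\ell-1}$ has length $\ell - 2 \ge |\symdiff{S_{1}}{S_{\ell-1}}|$, we conclude $\ell \ge |\symdiff{S}{S'}| + 4$; as the sequence was arbitrary, this is the desired lower bound, and combined with the upper bound it gives the claimed equality.

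The step I expect to require the most care is the ``forced first/last move'' analysis in the lower bound — specifically the claim that the vertex added in the first step lies outside $S \cup S'$, which rests on $S$ being maximal $c$-colorable in the induced subgraph $G[S \cup S']$ rather than merely in $G$; everything else is bookkeeping with symmetric differences. The lemma's distinctive hypothesis (no common addable vertex) is used in exactly two places, both identical in spirit: to force $v \ne v'$ in the upper bound and $w \ne w'$ in the lower bound.
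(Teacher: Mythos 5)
Your proof is correct and follows essentially the same route as the paper: the upper bound via two distinct witness vertices reducing to case~(\ref{itm:interval_nolocked}), and the lower bound by showing the first and last moves are forced additions of distinct vertices outside $S \cup S'$. You spell out the existence and distinctness of the witnesses and the forced-move analysis in more detail than the paper does, but the underlying argument is identical.
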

\begin{proof}
Let $u, v \in V(G) \setminus (S \cup S')$ be distinct vertices such that $S + u$ and $S' + v$ are $c$-colorable in $G$.
Since $S + u$ and $S' + v$ are not locked in $G[(S + u) \cup (S' + v)]$,
Theorem~\ref{thm:dist-interval}~(\ref{itm:interval_nolocked}) implies that
$\distTARG{G}{c}{S}{S'}{k} \le \distTARG{G}{c}{S+u}{S'+v}{k} + 2 = |\symdiff{(S+u)}{(S'+v)}| + 2 = |\symdiff{S}{S'}| + 4$.

Since $S$ and $S'$ are locked in $G[S \cup S']$ and 
there is no vertex $v \in V(G) \setminus (S \cup S')$ such that both $S + v$ and $S' + v$ are $c$-colorable in $G$,
we have $\min_{T \colon S \onestep T, \ T' \colon S' \onestep T'} |\symdiff{T}{T'}| = |\symdiff{S}{S'}|+2$.
This implies that $\distTARG{G}{c}{S}{S'}{k} \ge |\symdiff{S}{S'}|+4$.
\end{proof}


\subsection{Computing the distance in linear time}
\label{ssec:int-dist-comp}

We here explain how to check which case of Theorem~\ref{thm:dist-interval} applies to a given instance in linear time.
\begin{lemma}
\label{lem:inerval_maximality_of_two}
Given an interval graph $G$ and $c$-colorable sets $S$ and $S'$ in $G$,
one can either find a vertex $v \notin S \cup S'$ such that $S+v$ and $S'+v$ are $c$-colorable
or decide that no such vertex exists in linear time.
\end{lemma}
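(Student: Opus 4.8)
The plan is to exploit the clique-path structure of interval graphs so that the test "$S+v$ and $S'+v$ are both $c$-colorable" becomes a purely local, numerical check at each vertex $v \notin S \cup S'$. Recall from Proposition~\ref{prop:colorable=noclique} that for the interval (hence perfect) graph $G$, a set $X$ is $c$-colorable iff $G[X]$ has no clique of size $c+1$; and in an interval graph a clique is determined by a point, so $G[X]$ has a clique of size $c+1$ iff some point of the line is covered by $c+1$ intervals of $X$. Hence, for a fixed set $X$ that is already $c$-colorable, adding a vertex $v$ keeps it $c$-colorable iff no point of the interval $[l_v, r_v]$ is covered by exactly $c$ intervals of $X$; equivalently, $\max_{i \in [l_v, r_v]} \mathrm{cov}_X(i) \le c-1$, where $\mathrm{cov}_X(i) = |\{x \in X : l_x \le i \le r_x\}|$ is the coverage function of $X$.

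First I would compute, for each of the two sets $S$ and $S'$, the coverage function over the $t \le |V|$ clique-path positions. This is standard: initialize an array of length $t$ to zero, and for each $x \in S$ add $+1$ at position $l_x$ and $-1$ at position $r_x + 1$, then take a prefix sum; this takes $O(|V|)$ time and yields $\mathrm{cov}_S(i)$ for all $i$, and likewise $\mathrm{cov}_{S'}(i)$. Next, since a vertex $v$ is a valid candidate iff both $\max_{i \in [l_v, r_v]} \mathrm{cov}_S(i) \le c-1$ and $\max_{i \in [l_v, r_v]} \mathrm{cov}_{S'}(i) \le c-1$, I would precompute a sparse table (or simply a prefix-maximum / suffix-maximum decomposition, or—since we only need one pass—process vertices sorted by $l_v$ while maintaining a sliding-window maximum with a monotone deque) so that each range-maximum query costs $O(1)$ after $O(|V|)$ preprocessing. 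Then iterate over all $v \in V(G) \setminus (S \cup S')$, perform the two range-maximum queries, and return the first $v$ passing both tests; if none passes, report that no such vertex exists. Alternatively, and perhaps cleaner to state: define $f(i) = \max(\mathrm{cov}_S(i), \mathrm{cov}_{S'}(i))$, so $v$ is valid iff $\max_{i \in [l_v, r_v]} f(i) \le c-1$, reducing everything to a single range-maximum structure on $f$.

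The correctness argument is just the chain of equivalences above: "$S+v$ is $c$-colorable" $\iff$ "no clique of size $c+1$ in $G[S+v]$" $\iff$ "no clique point of $S$ lies in $[l_v, r_v]$ with coverage $c$" $\iff$ "$\mathrm{cov}_S(i) \le c-1$ for all $i \in [l_v, r_v]$", using that $S$ itself is already $c$-colorable so the only new large clique must contain $v$, and that by the Helly property every clique of $G[S+v]$ has a clique point among the integer positions $1, \dots, t$. I do not expect a serious obstacle here; the only point requiring a little care is making the range-maximum queries run in total linear time rather than $O(|V| \log |V|)$, which is handled either by the monotone-deque sliding window (if vertices are processed in a suitable order—and $l_v, r_v$ are already available in sorted-friendly form from the linear-time clique-path computation of \cite{UeharaU07}) or by a standard linear-preprocessing constant-query range-minimum/maximum data structure. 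All other steps—building coverage arrays, the final scan over vertices—are manifestly $O(|V|)$, so the whole procedure is linear.
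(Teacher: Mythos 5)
Your proposal is correct and follows essentially the same route as the paper: both reduce the test to checking, for each candidate $v$, whether $|S \cap M_i| < c$ and $|S' \cap M_i| < c$ for all clique-path positions $i \in [l_v, r_v]$. The only difference is implementation detail --- the paper increments the counts $|S \cap M_i|$ directly and scans each range $[l_v,r_v]$ explicitly, charging the total cost to $\sum_i |M_i| \le 2|E|+|V|$, whereas you use difference arrays and range-maximum queries; since the paper defines linear time as $O(|V|+|E|)$, your extra RMQ machinery is sound but unnecessary.
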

\begin{proof}
Let $(M_{1}, \dots, M_{t})$ be a clique path of $G$.
Recall that $M_{1}, \dots, M_{t}$ are the maximal cliques of $G$.
Thus, for every $T \subseteq V(G)$, the maximum clique size of $G[T]$
is equal to $\max_{1 \le i \le t} |T \cap M_{i}|$.

We compute $a_{i}^{S} = |S \cap M_{i}|$ for $1 \le i \le t$ as follows.
Initialize all $a_{i}^{S}$ to $0$;
for each $u \in S$, add $1$ to all $a_{i}^{S}$ with $l_{u} \le i \le r_{u}$.
In the same way, we compute $a_{i}^{S'} = |S' \cap M_{i}|$ for $1 \le i \le t$.
From the observation above, we can conclude that 
for each vertex $v \notin S \cup S'$, $S+v$ and $S'+v$ are $c$-colorable
if and only if $a_{i}^{S}, a_{i}^{S'}  < c$ for $l_{v} \le i \le r_{v}$.

The initialization and the test for all nonmembers of $S$ can be done in time $O(\sum_{i=1}^{t} |M_{i}|)$.
It suffices to show that $\sum_{i=1}^{t} |M_{i}| \le \sum_{v \in V(G)}(\deg(v) + 1) = 2|E(G)| + |V(G)|$.
Since $M_{1} \not\subseteq M_{2}$, there is a vertex $v$ with $l_{v} = r_{v} = 1$.
Thus $|M_{1}| = \deg(v) + 1$. By induction on the number of vertices, our claim holds.
\end{proof}

By setting $S = S'$ in the lemma above, we have the following lemma.
\begin{lemma}
\label{lem:inerval_maximality}
Given an interval graph $G$ and a $c$-colorable set $S$ in $G$,
one can either find a vertex $v \notin S$ such that $S +v$ is $c$-colorable
or decide that $S$ is maximal in linear time.
\end{lemma}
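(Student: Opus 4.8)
The plan is to derive this statement as an immediate specialization of Lemma~\ref{lem:inerval_maximality_of_two}. Recall that a $c$-colorable set $S$ in an interval graph $G$ fails to be maximal exactly when there is a vertex $v \notin S$ with $S + v$ still $c$-colorable. Setting $S' = S$ in Lemma~\ref{lem:inerval_maximality_of_two}, the algorithm there either produces a vertex $v \notin S \cup S' = S$ such that both $S + v$ and $S' + v = S + v$ are $c$-colorable, or it certifies that no such $v$ exists. In the first case $v$ witnesses that $S$ is not maximal; in the second case $S$ is maximal. Since the procedure of Lemma~\ref{lem:inerval_maximality_of_two} runs in linear time, so does this specialization.

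Concretely, I would first invoke the linear-time routine that computes a clique path $(M_1,\dots,M_t)$ of $G$ together with the endpoints $l_v, r_v$ for all $v$, as recalled at the start of Section~\ref{sec:interval}. Then I would run the counting step: initialize $a_i^S = 0$ for $1 \le i \le t$, and for each $u \in S$ add $1$ to every $a_i^S$ with $l_u \le i \le r_u$, so that $a_i^S = |S \cap M_i|$. The crucial observation, inherited from the proof of Lemma~\ref{lem:inerval_maximality_of_two}, is that for any $v \notin S$ the set $S + v$ is $c$-colorable if and only if $a_i^S < c$ for all $i$ with $l_v \le i \le r_v$, because the maximum clique size of $G[S + v]$ equals $\max_i |(S+v) \cap M_i|$ and $S$ itself is already $c$-colorable so $a_i^S \le c$ always. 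Scanning the nonmembers of $S$ and testing this condition finds a valid $v$ or reports that none exists.

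The running-time bound is the only point requiring care, and it is already handled in Lemma~\ref{lem:inerval_maximality_of_two}: the initialization and all the per-vertex tests cost $O(\sum_{i=1}^t |M_i|)$, and $\sum_{i=1}^t |M_i| \le 2|E(G)| + |V(G)|$ since the clique path has no two consecutive maximal cliques with one contained in the other, forcing a ``private'' vertex into $M_1$ and then proceeding by induction. Hence the total work is $O(|V(G)| + |E(G)|)$, which is linear. I do not anticipate any real obstacle here; the statement is a one-line corollary of the preceding lemma, and the only thing to verify is that substituting $S' = S$ does not break the equivalence ``$S+v$ $c$-colorable $\iff$ $a_i^S < c$ on $[l_v, r_v]$,'' which it plainly does not.
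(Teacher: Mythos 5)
Your proof is correct and is exactly the paper's argument: the paper derives this lemma in one line by setting $S' = S$ in Lemma~\ref{lem:inerval_maximality_of_two}, and your expanded justification of the counting step and running time matches the proof of that earlier lemma.
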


\begin{corollary}
\label{cor:inerval_distance}
Given an interval graph $G$ and $c$-colorable sets $S$ and $S'$ in $G$,
the distance $\distTARG{G}{c}{S}{S'}{k}$ can be computed in linear time.
\end{corollary}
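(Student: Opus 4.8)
The plan is to reduce the computation of $\distTARG{G}{c}{S}{S'}{k}$ to a constant number of the linear-time maximality tests established above and then read off the answer from Theorem~\ref{thm:dist-interval}. First I would dispose of the degenerate cases: if $|S| < k$ or $|S'| < k$ the instance is a no-instance and $\distTARG{G}{c}{S}{S'}{k} = \infty$, and if $S = S'$ the distance is $0$. These checks, together with the value $|\symdiff{S}{S'}|$, are obtained in linear time by scanning the two sets.

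Assume now $S \ne S'$ and $|S|, |S'| \ge k$. Next I would test whether $S$ or $S'$ is locked in $G$. Since being locked requires the set to have size exactly $k$, this is nontrivial only when $|S| = k$ (resp. $|S'| = k$), and in that case it amounts to deciding maximality of the $c$-colorable set in $G$, which is exactly Lemma~\ref{lem:inerval_maximality}. If either set is locked in $G$, then $S \notsevstep S'$ by Lemma~\ref{lem:interval-unreachable} and we return $\infty$. Otherwise we are in the hypothesis of Theorem~\ref{thm:dist-interval}, and it remains to decide which of its cases applies.

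To do so I would work in the induced subgraph $H := G[S \cup S']$. This is again an interval graph: restricting the given clique path $(M_{1},\dots,M_{t})$ and the endpoints $l_{v}, r_{v}$ to the vertices of $S \cup S'$ yields a valid interval representation of $H$ (and, after discarding cliques that became non-maximal, a clique path), all obtainable in linear time. The observation from the proof of Lemma~\ref{lem:inerval_maximality_of_two} — that the maximum clique size of $H[T]$ equals $\max_{i} |T \cap M_{i}|$ with the $M_{i}$ restricted to $S \cup S'$ — still holds, so applying Lemma~\ref{lem:inerval_maximality} to $H$ decides, for each of $S$ and $S'$ (again only when its size is $k$), whether it is locked in $H$. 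Counting how many of $S, S'$ are locked in $H$ selects between cases \ref{itm:interval_nolocked}, \ref{itm:interval_onelocked}, and \ref{itm:interval_twolocked} of Theorem~\ref{thm:dist-interval}, giving $d = |\symdiff{S}{S'}|$, $d = |\symdiff{S}{S'}| + 2$, or (in the last case) requiring one further test. That test — whether some $v \notin S \cup S'$ has both $S + v$ and $S' + v$ $c$-colorable in $G$ — is precisely Lemma~\ref{lem:inerval_maximality_of_two}, and it distinguishes subcases \ref{itm:interval_twolocked_common} ($d = |\symdiff{S}{S'}| + 2$) and \ref{itm:interval_twolocked_no_common} ($d = |\symdiff{S}{S'}| + 4$).

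Since the whole procedure performs only a bounded number of linear-time subroutines plus linear-time bookkeeping, it runs in linear time overall. The only point that needs a little care — and the closest thing to an obstacle — is making sure the auxiliary interval graph $H = G[S \cup S']$ comes equipped with a clique path in linear time and that ``locked in $G[S \cup S']$'' is faithfully tested by the restricted data; once one checks that restricting the clique path preserves the consecutive-appearance property and the maximum-clique characterization, the rest is routine.
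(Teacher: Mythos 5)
Your proposal is correct and follows essentially the same route as the paper: test lockedness in $G$ via Lemma~\ref{lem:inerval_maximality} (returning $\infty$ by Lemma~\ref{lem:interval-unreachable} if locked), test lockedness in $G[S \cup S']$ to select among the cases of Theorem~\ref{thm:dist-interval}, and use Lemma~\ref{lem:inerval_maximality_of_two} to separate the two subcases of case~\ref{itm:interval_twolocked}. Your explicit justification that the clique path and the maximum-clique characterization restrict correctly to $G[S \cup S']$ is a detail the paper leaves implicit, but it does not change the argument.
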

\begin{proof}
We first check whether $S$ or $S'$ is locked in $G$. If so, the distance is $\infty$.
Otherwise, we check whether $S$ and $S'$ are locked in $G[S \cup S']$.
If not both of them are locked in $G[S \cup S']$,
then we can apply Theorem~\ref{thm:dist-interval}~(\ref{itm:interval_nolocked}) or (\ref{itm:interval_onelocked})
and determine the distance.
If both $S$ and $S'$ are locked in $G[S \cup S']$,
we find a vertex $v \notin S \cup S'$ such that both $S + v$ and $S' + v$ are $c$-colorable in $G$.
Everything can be done in linear time by Lemmas~\ref{lem:inerval_maximality_of_two} and \ref{lem:inerval_maximality}.
\end{proof}

\subsection{Finding a shortest reconfiguration sequence in linear time}
\label{ssec:int-seq}

Here we describe how we find an actual shortest reconfiguration sequence in linear time.
To this end, we need to be careful about the representation of a reconfiguration sequence.
If we always output the whole set, the total running time cannot be smaller than $k \cdot \distTARG{G}{c}{S}{S'}{k}$.
However, this product can be quadratic.
To avoid this blow up, we output only the difference from the previous set.
That is, if the current set is $S$ and the next set is $S + v$ ($S - v$), we output $+ v$ ($-v$, resp.).
We also fully use the reversible property of reconfiguration sequences
and output them sometimes from left to right and sometimes from right to left.
For example, we may output a reconfiguration sequence $\langle S_{0}, \dots, S_{5} \rangle$
as first $S_{0} \onestep S_{1} \onestep S_{2}$, next $S_{5} \onestep S_{4} \onestep S_{3}$, then $S_{2} \onestep S_{3}$.
It is straightforward to output the sequence from left to right by using a linear-size buffer.

\begin{theorem}
\label{thm:interval}
Given an interval graph $G$ and $c$-colorable sets $S$ and $S'$ in $G$,
a $\TAR{k}$-sequence of length $\distTARG{G}{c}{S}{S'}{k}$ can be computed in linear time.
\end{theorem}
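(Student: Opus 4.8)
The plan is to combine the distance characterization of Theorem~\ref{thm:dist-interval} with the constructive proof of Lemma~\ref{lem:interval_nolocked} to actually produce a sequence, being careful about the output representation described above. First I would handle the degenerate and unreachable cases: using Corollary~\ref{cor:inerval_distance} (equivalently Lemmas~\ref{lem:inerval_maximality} and \ref{lem:inerval_maximality_of_two}) we decide in linear time whether $S$ or $S'$ is locked in $G$; if so, output ``$\infty$'' and stop. Otherwise we determine which of the four cases of Theorem~\ref{thm:dist-interval} applies, again in linear time, and reduce to the base case~(\ref{itm:interval_nolocked}): in case~(\ref{itm:interval_onelocked}) we prepend one added vertex $v \notin S \cup S'$ found by Lemma~\ref{lem:inerval_maximality}; in case~(\ref{itm:interval_twolocked_common}) we prepend the single common vertex $v$ found by Lemma~\ref{lem:inerval_maximality_of_two}; in case~(\ref{itm:interval_twolocked_no_common}) we prepend $S \onestep S+u$ and append $S'+v \onestep S'$ (with $u \ne v$, both from Lemma~\ref{lem:inerval_maximality}), and then solve the non-locked instance $(G,c,S+u,S'+v,k)$. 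Each prefix/suffix step is a single $+v$ output, so the overhead is $O(1)$ writes plus the linear-time case analysis.

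The core task is thus: given an interval instance where neither $S$ nor $S'$ is locked in $G[S\cup S']$, produce a reconfiguration sequence of length exactly $|\symdiff{S}{S'}|$. I would make the inductive proof of Lemma~\ref{lem:interval_nolocked} algorithmic. The recursion there removes a single symmetric-difference element at each step, choosing: if $|S|=k$ (or $|S'|=k$), add a vertex of $S'\setminus S$ (resp.\ $S\setminus S'$) that keeps $c$-colorability; if $S\subseteq S'$ or $S'\subseteq S$, add the missing vertices in any order; otherwise pick $u \in S\setminus S'$ minimizing $l_u$ and $w \in S'\setminus S$ with smallest right end (assuming by symmetry $r_w \le r_v$ where $v\in S\setminus S'$ has smallest right end among $S\setminus S'$), output $-u$ to move to $T=S-u$, and recurse on $(T,S')$ — the proof guarantees $T$ is still unlocked in $G[T\cup S']$ because $T\onestep T'=T+w$. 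Since $|\symdiff{\cdot}{\cdot}|$ drops by one per step and is at most $|V(G)|$, the sequence has linear length.

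To keep the whole procedure linear rather than quadratic, the key implementation point is that recomputing a minimizer and re-verifying $c$-colorability from scratch at every step would cost $\Theta(|V|+|E|)$ per step. Instead I would precompute, once, the clique-size profiles $a_i^S = |S\cap M_i|$ and $a_i^{S'} = |S'\cap M_i|$ over the clique path $(M_1,\dots,M_t)$ as in Lemma~\ref{lem:inerval_maximality_of_two}, and maintain the profile of the current set under single insertions/deletions; each such update touches only the $O(\deg(\cdot)+1)$ indices in one vertex's interval, so the total update cost telescopes to $O(|V|+|E|)$. The vertices of $S\setminus S'$ and $S'\setminus S$ are kept in lists sorted by left end and by right end (one $O(|V|)$ bucket sort), so each choice of $u$ or $w$ is an $O(1)$ pop; feasibility of $T+w$ need not even be tested, since the lemma proves it always holds. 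Writing the output from both ends — the additions to $S$ via the suffix, the deletions/swaps via the prefix, using a linear-size buffer and then flushing left to right — matches exactly the representation convention stated before Theorem~\ref{thm:interval}.

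The main obstacle I anticipate is the amortized-linear bookkeeping, not the correctness: one must argue that over the entire sequence every vertex is inserted into and deleted from the current set $O(1)$ times (which follows because the symmetric difference only shrinks, so once a vertex leaves $S$ or is added toward $S'$ it is never touched again), and that maintaining the sorted candidate lists and the clique-profile arrays under these $O(|V|)$ elementary operations costs $\sum_v O(\deg(v)+1) = O(|V|+|E|)$ in total, mirroring the bound $\sum_i |M_i| \le 2|E(G)|+|V(G)|$ established in Lemma~\ref{lem:inerval_maximality_of_two}. Given that accounting, correctness is immediate from Theorem~\ref{thm:dist-interval} and the constructive content of Lemmas~\ref{lem:interval_nolocked}--\ref{lem:interval_twolocked_no_common}.
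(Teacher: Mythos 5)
Your proposal is correct and follows essentially the same route as the paper's proof: reduce cases (\ref{itm:interval_onelocked}) and (\ref{itm:interval_twolocked}) to case (\ref{itm:interval_nolocked}) by prepending/appending the added vertices found via Lemmas~\ref{lem:inerval_maximality_of_two} and \ref{lem:inerval_maximality}, then make the induction of Lemma~\ref{lem:interval_nolocked} algorithmic with the same greedy choice of $u$ (minimum left-end in $S \setminus S'$) and $w$ (minimum right-end in $S' \setminus S$), the same sorted-by-endpoint lists with amortized left-to-right scanning, and the same difference-only, two-ended output convention. The only cosmetic difference is that you additionally maintain the clique-path profiles during the main loop, which the paper avoids since feasibility of each swap is already guaranteed by the lemma — as you yourself note.
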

\begin{proof}
We first test which case of Theorem~\ref{thm:dist-interval} applies to the given instance.
This can be done in linear time as shown in the proof of Corollary~\ref{cor:inerval_distance}.
We reduce Cases~(\ref{itm:interval_onelocked}) and (\ref{itm:interval_twolocked})
to Case~(\ref{itm:interval_nolocked}).
The reductions below can be done in linear time
by using Lemmas~\ref{lem:inerval_maximality_of_two} and \ref{lem:inerval_maximality}.

Assume first that Case~(\ref{itm:interval_onelocked}) applies;
that is, $S$ is locked but $S'$ is not in $G[S \cup S']$.
We find a vertex $v \in V(G) \setminus (S \cup S')$
such that $S + v$ is a $c$-colorable set of $G$.
We then add $v$ to $S$.
As we saw in the proof of Lemma~\ref{lem:interval_onelocked},
this is a valid step in a shortest reconfiguration sequence.
Furthermore, after this step, $S$ and $S'$ are not locked in $G[S \cup S']$.

Next assume that Case~(\ref{itm:interval_twolocked}) applies;
that is, both $S$ and $S'$ are locked in $G[S \cup S']$.
We find vertices $u, v \notin S \cup S'$ such that
$S + u$ and $S' + v$ are $c$-colorable in $G$.
In Case (\ref{itm:interval_twolocked_common}), we further ask that $u = v$.
We then add $u$ to $S$ and $v$ to $S'$.
The proofs of Lemmas~\ref{lem:interval_twolocked_common} and \ref{lem:interval_twolocked_no_common}
imply that these are valid steps in a shortest reconfiguration sequence,
and that $S$ and $S'$ are no longer locked in $G[S \cup S']$ after these steps.

We now handle Case (\ref{itm:interval_nolocked}),
where $S$ and $S'$ are not locked in $G[S \cup S']$.
Assume that $S \not\subseteq S'$ and $S' \not\subseteq S$
since otherwise finding a shortest sequence is trivial.
We first compute two orderings of the vertices in $S \cup S'$:
nondecreasing orderings of left-ends $l_{v}$ and of right-ends $r_{v}$.
Such orderings can be constructed in linear time from a clique path.
We maintain information for each vertex $v$ whether
$v \in S \setminus S'$, $v \in S' \setminus S$, or $v \notin \symdiff{S}{S'}$.
Using this information, we can also maintain vertices of the smallest left-end and of the smallest right-end
in each of $S \setminus S'$ and $S' \setminus S$.

Let $v \in S \setminus S'$ and $w \in S' \setminus S$ be vertices with the smallest right-end in each set.
By symmetry, assume that $r_{w} \le r_{v}$. Let $u \in S \setminus S'$ be a vertex that minimizes $l_{u}$.
As shown in the proof of Lemma~\ref{lem:interval_nolocked},
$S \onestep (S - u) \onestep (S -u + w)$ under $\TAR{k}$ and $|\symdiff{S}{S'}| = |\symdiff{(S -u + w)}{S'}| + 2$.
We output the two steps $S-u$ and $S -u + w$.

We then set $S:= S - u + w$ and update the information as $u, w \notin \symdiff{S}{S'}$ anymore.
We also have to maintain the vertices of the smallest left- and right-ends in each $S \setminus S'$ and $S' \setminus S$.
Let $w' \in S' \setminus S$ be a vertex with the smallest right-end.
The vertex $w$ can be found by sweeping the nondecreasing ordering of the right-ends from the position of $w$ to the right.
The vertex $u' \in S \setminus S'$ with the smallest left-end can be found in an analogous way.
Although a single update can take super constant steps, it sums up to a linear number of steps in total
since it can be seen as a single left-to-right scan of each nondecreasing ordering.
Therefore, the total running time is linear.
\end{proof}


\section{Split graphs}
For split graphs, we consider two cases.
In the first case, we assume that $c$ is a fixed constant, and show that the problem under $\TARrule$ (and $\TJ$) can be solved in $O(n^{c+1})$ time.
The second case is the general problem having $c$ as a part of input.
We show that in this case the problem is PSPACE-complete under all reconfiguration rules.

\subsection{Polynomial-time algorithm for fixed $c$}

Let $G = (K, I; E)$ be a split graph, where $K$ is a clique and $I$ is an independent set.
For $C \subseteq K$ with $|C| \le c$, we define $T_{C}$ as follows:
\[
  T_{C}
  = 
  \begin{cases}
    C \cup I & \text{if } |C| < c, \\
    C \cup I \setminus \{u \in I \mid C \subseteq N_{G}(u)\} & \text{if } |C| = c.
  \end{cases}
\]

We can see that $T_{C}$ is $c$-colorable for every $C \subseteq K$ with $|C| \le c$ as follows.
Every clique $K' \subseteq T_{C}$ includes at most $|C| \le c$ vertices in $C$ and at most one vertex in $I$.
Since a vertex in $T_{C} \cap I$ has fewer than $c$ neighbors in $C$, 
the maximum clique size of $G[T_{C}]$ is at most $c$.

\begin{lemma}
\label{lem:split_to-max}
If $S$ is a $c$-colorable set of $G$ with $|S| \ge k$,
then $S \sevstep T_{S \cap K}$ under $\TAR{k}$.
\end{lemma}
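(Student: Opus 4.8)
The plan is to show first that $S \subseteq T_{S \cap K}$, and then to realize the reconfiguration by simply inserting the vertices of $T_{S \cap K} \setminus S$ one at a time, never removing anything.

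To set this up, write $C := S \cap K$. Since $C$ is a clique of $G$ contained in the $c$-colorable set $S$, and split graphs are perfect, Proposition~\ref{prop:colorable=noclique} gives $|C| \le c$; hence $T_{C}$ is well defined. The containment $S \subseteq T_{C}$ is then straightforward to check. If $|C| < c$ it is immediate, because $T_{C} = C \cup I$ and $S = C \cup (S \cap I) \subseteq C \cup I$. If $|C| = c$, it suffices to verify that no $u \in S \cap I$ lies in $\{u \in I \mid C \subseteq N_{G}(u)\}$; but such a $u$ together with $C$ would form a clique of size $c+1$ inside $G[S]$, contradicting the $c$-colorability of $S$ (again by Proposition~\ref{prop:colorable=noclique}). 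So $S \cap I$ is disjoint from the set deleted in the definition of $T_{C}$, and therefore $S \subseteq T_{C}$.

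With the containment in hand, recall that $T_{C}$ itself is $c$-colorable, as observed just before the lemma statement. Consequently every set $R$ with $S \subseteq R \subseteq T_{C}$ is $c$-colorable, since $G[R]$ is an induced subgraph of $G[T_{C}]$ and the restriction of a proper $c$-coloring of $G[T_{C}]$ properly $c$-colors $G[R]$. Now I would enumerate $T_{C} \setminus S$ in an arbitrary order as $v_{1}, \dots, v_{m}$ and set $S_{i} := S \cup \{v_{1}, \dots, v_{i}\}$ for $0 \le i \le m$. Then $\langle S_{0} = S, S_{1}, \dots, S_{m} = T_{C} \rangle$ is a $\TAR{k}$-sequence: each $S_{i}$ is $c$-colorable by the previous sentence, $|\symdiff{S_{i-1}}{S_{i}}| = 1$, and $|S_{i}| \ge |S| \ge k$ for every $i$. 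This yields $S \sevstep T_{S \cap K}$ under $\TAR{k}$.

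The only step requiring any care is the containment $S \subseteq T_{C}$, and even that reduces to the single clique argument above. There is essentially no obstacle in the reconfiguration itself: monotone insertion of vertices into a subset of a fixed $c$-colorable set can never break $c$-colorability, and it only increases the size, so the threshold $k$ is automatically respected.
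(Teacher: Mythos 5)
Your proof is correct and follows essentially the same route as the paper's: establish $S \subseteq T_{S\cap K}$ by the two cases $|S\cap K|<c$ and $|S\cap K|=c$ (the latter via the clique-of-size-$c+1$ contradiction), and then conclude reachability by monotone insertion. You merely spell out the insertion sequence and the preservation of $c$-colorability and the threshold $k$, which the paper leaves implicit.
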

\begin{proof}
Note that $T_{S \cap K}$ is $c$-colorable since $S$ is $c$-colorable and thus $|S \cap K| \le c$.
We now show that $S \subseteq T_{S \cap K}$, which implies $S \sevstep T_{S \cap K}$.

If $|S \cap K| < c$, then $T_{S \cap K} = (S \cap K) \cup I$, and thus $S \subseteq T_{S \cap K}$.
If $|S \cap K| = c$, then $S \cap \{u \in I \mid (S \cap K) \subseteq N(u)\} = \emptyset$ since $S$ is $c$-colorable.
Thus it holds that $S \subseteq (S \cap K) \cup (I \setminus \{u \in I \mid (S \cap K) \subseteq N(u)\}) = T_{S \cap K}$.
\end{proof}

By the reversibility of reconfiguration sequences,
we can reduce the problem as follows.
\begin{corollary}
\label{cor:split_maximal}
If $S$ and $S'$ are $c$-colorable sets of $G$ with $|S| \ge k$ and $|S'| \ge k$,
then $S \sevstep S'$ under $\TAR{k}$ if and only if $T_{S \cap K} \sevstep T_{S' \cap K}$ under $\TAR{k}$.
\end{corollary}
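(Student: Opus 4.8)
The plan is to obtain the corollary immediately from Lemma~\ref{lem:split_to-max} together with the reversibility of reconfiguration sequences observed in Section~\ref{sec:pre}. The one structural fact I would isolate first is that, for a fixed threshold $k$, the relation $\sevstep$ under $\TAR{k}$ is an equivalence relation on the $c$-colorable sets of $G$ of size at least $k$: reflexivity holds via the length-$0$ sequence, symmetry is exactly the reversibility of $\TAR{k}$-sequences, and transitivity follows by concatenating two $\TAR{k}$-sequences, the concatenation again being a valid $\TAR{k}$-sequence since every set appearing in it still has size at least $k$.

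Given this, the proof is symmetric in the two implications. By Lemma~\ref{lem:split_to-max}, $S \sevstep T_{S \cap K}$ and $S' \sevstep T_{S' \cap K}$ under $\TAR{k}$; these are genuine $\TAR{k}$-sequences because $S \subseteq T_{S \cap K}$, so only vertices are added and no intermediate set drops below $|S| \ge k$ (and likewise for $S'$). For the forward direction, assuming $S \sevstep S'$, I would chain $T_{S \cap K} \sevstep S \sevstep S' \sevstep T_{S' \cap K}$ (reversing the first sub-sequence) to conclude $T_{S \cap K} \sevstep T_{S' \cap K}$. For the backward direction, assuming $T_{S \cap K} \sevstep T_{S' \cap K}$, I would chain $S \sevstep T_{S \cap K} \sevstep T_{S' \cap K} \sevstep S'$ (reversing the last sub-sequence) to conclude $S \sevstep S'$.

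I do not anticipate any real obstacle: the only point needing a line of justification is that concatenation and reversal preserve the property of being a valid $\TAR{k}$-sequence, and both are immediate from the definitions in Section~\ref{sec:pre}. In the write-up I would simply invoke the symmetry and transitivity of $\sevstep$ and cite Lemma~\ref{lem:split_to-max} twice.
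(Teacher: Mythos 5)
Your proposal is correct and matches the paper's intended argument: the paper derives the corollary from Lemma~\ref{lem:split_to-max} together with the reversibility (and implicitly the transitivity) of $\TAR{k}$-sequences, exactly as you do. The paper does not even spell out the chaining, so your write-up is if anything slightly more explicit.
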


Now we state the crucial lemma for solving the reduced problem.
\begin{lemma}
\label{lem:split-edge}
Let $C \subseteq K$ and $v \in K \setminus C$.
If $T_{C}$ and $T_{C+v}$ are $c$-colorable sets of size at least $k$,
then $T_{C} \sevstep T_{C+v}$ under $\TAR{k}$ if and only if $|T_{C+v}| \ge k+1$.
\end{lemma}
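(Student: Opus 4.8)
The plan is to reduce everything to a case split on $|C|$. Since $T_{C+v}$ is defined, $|C+v|\le c$, i.e.\ $|C|\le c-1$. If $|C|\le c-2$ then also $|C+v|<c$, so $T_{C}=C\cup I$ and $T_{C+v}=(C+v)\cup I=T_{C}+v$; thus $|T_{C+v}|=|T_{C}|+1\ge k+1$, the right-hand condition holds automatically, and $T_{C}\onestep T_{C+v}$ is a legal $\TAR{k}$-step because both sets have size at least $k$. So I would dispose of this case in one line and put all the work into the remaining case $|C|=c-1$, where $T_{C}=C\cup I$ and $T_{C+v}=(C+v)\cup(I\setminus A)$ with $A:=\{u\in I:C+v\subseteq N_{G}(u)\}$; here $\symdiff{T_{C}}{T_{C+v}}=A\cup\{v\}$ and $|C\cup(I\setminus A)|=|T_{C+v}|-1$.

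For the ``if'' direction, assuming $|T_{C+v}|\ge k+1$, I would build the sequence in two phases. The point is that $v$ cannot be added to $T_{C}$ first, because for any $u\in A$ the set $(C+v)\cup\{u\}$ is a clique of size $c+1$; so the first phase deletes the vertices of $A$ from $T_{C}$ one at a time in any order. Each intermediate set is a subset of the $c$-colorable set $T_{C}$, hence $c$-colorable, and the sizes decrease monotonically from $|T_{C}|$ to $|C\cup(I\setminus A)|=|T_{C+v}|-1\ge k$; so this is a valid $\TAR{k}$-sequence ending at $C\cup(I\setminus A)$. The second phase is the single step of adding $v$, yielding $(C+v)\cup(I\setminus A)=T_{C+v}$, which is $c$-colorable by hypothesis and has size $\ge k+1\ge k$, so the step is legal. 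Hence $T_{C}\sevstep T_{C+v}$ under $\TAR{k}$. (When $A=\emptyset$ the first phase is empty and this degenerates to a single step, which is consistent.)

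For the ``only if'' direction I would argue the contrapositive. Since $|T_{C+v}|\ge k$, failing $|T_{C+v}|\ge k+1$ means $|T_{C+v}|=k$. I would then show $T_{C+v}$ is a maximal $c$-colorable set: every vertex outside $T_{C+v}$ lies either in $K\setminus(C+v)$ or in $A$, and in both cases adding it to $T_{C+v}$ creates the clique $C+v$ together with that vertex, of size $c+1$, destroying $c$-colorability since $|C+v|=c$. Thus $T_{C+v}$ is locked in $G$, and because $v\notin C\cup I=T_{C}$ we have $T_{C}\ne T_{C+v}$, so Lemma~\ref{lem:interval-unreachable} gives $T_{C}\notsevstep T_{C+v}$. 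Contraposing, $T_{C}\sevstep T_{C+v}$ forces $|T_{C+v}|\ge k+1$.

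I expect the only genuinely non-obvious idea to be recognizing that $v$ must be inserted \emph{after} the vertices of $A$ have been removed; once that is seen, the size bookkeeping shows the threshold $|T_{C+v}|\ge k+1$ is exactly what keeps the deletion phase from dropping below $k$, and the converse is immediate from the maximality of $T_{C+v}$ via the already-proved unreachability lemma. The remaining work — the maximality check and the elementary size arithmetic — is routine.
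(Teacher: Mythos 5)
Your proof is correct and follows essentially the same route as the paper's: for the ``if'' direction you build the sequence that removes the vertices of $A$ and then inserts $v$, which is exactly the reverse of the paper's sequence $T_{C+v} \onestep T_{C+v}-v \sevstep T_{C}$ obtained via Lemma~\ref{lem:split_to-max}, and for the ``only if'' direction both arguments reduce to the maximality (lockedness) of $T_{C+v}$ when $|C+v|=c$ and $|T_{C+v}|=k$, with the $|C+v|<c$ case dismissed by the same size arithmetic. The only cosmetic differences are that you do the case split on $|C|$ up front and cite Lemma~\ref{lem:interval-unreachable} explicitly where the paper just asserts unreachability.
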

\begin{proof}
To prove the if part, assume that $|T_{C+v}| \ge k+1$.
Then, $T_{C+v} \onestep T_{C+v} -v$.
Since $(T_{C+v} -v) \cap K = C$, it holds that $T_{C+v} -v \sevstep T_{C}$ by Lemma~\ref{lem:split_to-max}.
Thus we have $T_{C+v} \sevstep T_{C}$.

To prove the only-if part, assume that $|T_{C+v}| = k$.
If $|C+v| = c$, then $T_{C+v}$ is a maximal $c$-colorable set
and no other $c$-colorable set of size at least $k$ can be reached from $T_{C+v}$.
Assume that $|C+v| < c$, and hence $|C| < c$.
Then, $T_{C+v} = (C+v) \cup I$ and $T_{C} = C \cup I$.
Therefore, we have $|T_{C}| = |T_{C+v} - v| = k-1$, a contradiction.
\end{proof}

Combining the arguments in this subsection, we are now ready to present 
a polynomial-time algorithm.
\begin{theorem}
\label{thm:split_fixed}
Given an $n$-vertex split graph $G = (K,I;E)$ and $c$-colorable sets $S$ and $S'$ of size at least $k$ in $G$,
it can be decided whether $S \sevstep S'$ under $\TAR{k}$ in time $O(n^{c+1})$.
\end{theorem}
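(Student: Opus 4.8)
The plan is to use Corollary~\ref{cor:split_maximal} to replace $S$ and $S'$ by the canonical sets $T_{S \cap K}$ and $T_{S' \cap K}$, and then to decide reachability in the auxiliary graph on the vertex set $\{C \subseteq K : |C| \le c,\ |T_C| \ge k\}$ in which $C$ and $C+v$ are joined whenever $v \in K \setminus C$ and both $T_C$ and $T_{C+v}$ have size at least $k$. By Lemma~\ref{lem:split-edge}, within this graph a single edge $CC'$ with $|C \vartriangle C'| = 1$ can actually be traversed in a $\TAR{k}$-sequence precisely when the larger of $T_C, T_{C'}$ has size at least $k+1$ (equivalently, exactly when at least one of the two endpoints is not itself a locked set). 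First I would argue that traversing one such edge is the only type of move we need: any $\TAR{k}$-sequence between two canonical sets can be broken at the canonical sets it passes through, and between consecutive canonical sets the sequence changes the clique part $S \cap K$ by single vertices (every intermediate set is $c$-colorable, hence has at most $c$ vertices in $K$), so by Lemma~\ref{lem:split_to-max} it suffices to move along edges of the auxiliary graph.

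Next I would set up the algorithm. Enumerate all $C \subseteq K$ with $|C| \le c$; there are $O(n^c)$ of them, and for each one $|T_C|$ is computable — when $|C| < c$ it is $|C| + |I|$, and when $|C| = c$ it is $|C| + |I| - |\{u \in I : C \subseteq N_G(u)\}|$, which takes $O(n)$ time to evaluate, so all the sizes together cost $O(n^{c+1})$. Discard every $C$ with $|T_C| < k$. Build the auxiliary graph: for each surviving $C$ and each $v \in K \setminus C$, add the edge $C\,(C+v)$ if $C+v$ also survived; by Lemma~\ref{lem:split-edge} this edge is usable in a $\TAR{k}$-sequence iff $\max(|T_C|, |T_{C+v}|) \ge k+1$, so mark it usable in that case. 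The graph has $O(n^c)$ vertices and $O(n^{c+1})$ edges and is built in $O(n^{c+1})$ time. Finally, run a graph search (BFS/DFS) from $S \cap K$ restricted to usable edges and report whether $S' \cap K$ is reached; a special case is $S \cap K = S' \cap K$, where the answer is yes. Correctness follows: $S \sevstep S'$ iff $T_{S \cap K} \sevstep T_{S' \cap K}$ (Corollary~\ref{cor:split_maximal}), and the latter holds iff there is a walk from $S \cap K$ to $S' \cap K$ using only usable auxiliary edges, by Lemmas~\ref{lem:split_to-max} and \ref{lem:split-edge} together with the reversibility of reconfiguration sequences.

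The main obstacle I expect is the completeness direction of the reduction to the auxiliary graph: showing that if $T_{S \cap K} \sevstep T_{S' \cap K}$ then there really is a walk using only usable edges. The subtlety is that an arbitrary $\TAR{k}$-sequence may wander through sets whose clique part repeats or oscillates and may temporarily visit a $C$ with $|T_C| \ge k$ only by way of intermediate sets whose own $T$-closure is too small; one has to verify that projecting the sequence to its clique parts, collapsing repetitions, and inserting the detours through canonical sets provided by Lemma~\ref{lem:split_to-max} still yields a valid walk in which every edge actually used is usable. The key point making this go through is that whenever the sequence passes from a set with clique part $C$ to one with clique part $C+v$ (or vice versa), at least one of the two sets along the way has size $\ge k+1$ — since the step that changes $S \cap K$ is either an addition (so the smaller set already has size $\ge k$ and the larger has size $\ge k+1$) or the mirror image — and this forces $\max(|T_C|,|T_{C+v}|) \ge k+1$ via Lemma~\ref{lem:split_to-max} applied to whichever endpoint attains the larger size; hence the corresponding auxiliary edge is usable. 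I would write this argument out carefully, as it is the crux; the running-time bookkeeping and the two easy directions are routine given the lemmas already proved.
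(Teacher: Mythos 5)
Your overall strategy is exactly the paper's: reduce to the canonical sets $T_{S\cap K}$ and $T_{S'\cap K}$ via Corollary~\ref{cor:split_maximal}, build an auxiliary graph on the subsets $C \subseteq K$ with $|C| \le c$ and $|T_C| \ge k$, and run a graph search; the projection argument you identify as the crux is the same one the paper uses, and your running-time bookkeeping is fine. The genuine problem is your edge-usability criterion. Lemma~\ref{lem:split-edge} states that $T_C \sevstep T_{C+v}$ under $\TAR{k}$ if and only if $|T_{C+v}| \ge k+1$ --- the condition is on the $T$-set of the \emph{superset} $C+v$, not on $\max(|T_C|,|T_{C+v}|)$, and it is also not equivalent to ``at least one of the two endpoints is not locked.'' The conditions genuinely differ: when $|C+v|=c$ and at least two vertices of $I$ are adjacent to all of $C+v$, one can have $|T_{C+v}| = k$ (so $T_{C+v}$ is a maximal $c$-colorable set of size $k$, i.e.\ locked, and the edge is not traversable) while $|T_C| \ge k+1$ (so the max condition holds and one endpoint is unlocked). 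Concretely, take $c=2$, $k=2$, $K=\{a,b\}$, $I=\{u_1,u_2,u_3\}$ with every $u_i$ adjacent to both $a$ and $b$, $S=\{a,u_1\}$, $S'=\{a,b\}$. Then $S'=T_{\{a,b\}}$ is a maximal $2$-colorable set of size $k$, hence locked, so $S \notsevstep S'$; but $|T_{\{a\}}|=4\ge k+1$, so your algorithm marks the edge $\{a\}$--$\{a,b\}$ as usable and incorrectly answers yes. Soundness of your reduction fails (completeness is unaffected, since $|T_{C+v}|\ge k+1$ implies your max condition).

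The slip is easy to repair, and in fact your own completeness argument already derives the correct asymmetric condition: whether the step that crosses from clique part $C$ to clique part $C+v$ is an addition or a removal, the intermediate set whose clique part is $C+v$ is the one guaranteed to have size at least $k+1$, and since that set is contained in $T_{C+v}$ this yields $|T_{C+v}| \ge k+1$ --- not merely $\max(|T_C|,|T_{C+v}|)\ge k+1$. Replace the usability test by ``$|T_{C+v}| \ge k+1$'' (where $C+v$ is the larger of the two endpoints) and the proof goes through as in the paper.
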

\begin{proof}
We construct a graph $H = (\mathcal{K}, \mathcal{E})$ from $G = (K,I;E)$ as follows:
\begin{align*}
  \mathcal{K} &= \{C \subseteq K \mid |C| \le c \text{ and } |T_{C}| \ge k\}, \\
  \mathcal{E} &= \{\{C, C+v\} \mid C, C+v \in \mathcal{K} \text{ and } |T_{C+v}| \ge k + 1\}.
\end{align*}
For each $C \subseteq K$ with $|C| < c$, the size $|T_{C}| = |C| + |I|$ can be computed in constant time
(assuming that we know the size $|I|$ in advance).
If $|C| = c$, then we need to compute the size of $\{u \in I \mid C \subseteq N(u)\}$.
This can be done in time $O(n)$ for each $C$. 
In $H$, each $C \in \mathcal{K}$ is adjacent to at most $|C|$ subsets of $C$:
if $|T_{C}| > k$, then $C$ is adjacent to all $C - v$ with $v \in C$;
otherwise, $C$ has no edge to its subsets. This can be computed in time $O(n)$ for each $C$.
In total, the graph $H$ with $O(n^{c})$ vertices and $O(n^{c})$ edges can be constructed in time $O(n^{c+1})$.
For $C, C' \in \mathcal{K}$, one can decide whether $H$ has a $C$--$C'$ path in time $O(n^{c})$.

Let $C := S \cap K$ and $C' := S' \cap K$.
Now, by Corollary~\ref{cor:split_maximal}, it suffices to show that 
$T_{C} \sevstep T_{C'} $ if and only if there is a path between $C$ and $C'$ in $H$.

Assume that $T_{C} \sevstep T_{C'}$.
Let $\langle S_{1} = T_{C}, S_{2}, \dots, S_{p} = T_{C'} \rangle$ be a reconfiguration sequence from $T_{C}$ to $T_{C'}$,
and let $C_{i} = S_{i} \cap K$ for $1 \le i \le p$. Observe that $|\symdiff{C_{i}}{C_{i+1}}| \le 1$ for each $1 \le i < p$.
If $C_{i} \ne C_{i+1}$, Corollary~\ref{cor:split_maximal} and Lemma~\ref{lem:split-edge}
imply that $\{C_{i}, C_{i+1}\} \in \mathcal{E}$.
Since $C_{1} = C$ and $C_{p} = C'$, we can conclude that $H$ has a $C$--$C'$ path.

Next assume that there is a path between $C$ and $C'$ in $H$.
Let $(C_{1} = C, C_{2}, \dots, C_{q} = C')$ be such a path.
Lemma~\ref{lem:split-edge} and the definition of $H$ together imply that
$T_{C_{i}} \sevstep T_{C_{i+1}}$ for each $1 \le i < q$.
Since $C_{1} = C$ and $C_{q} = C'$, we have $T_{C} \sevstep T_{C'}$.
\end{proof}


\subsection{PSPACE-completeness when $c$ is a part of input}

For split graphs with $c$ as a part of input, NP-completeness of \textsc{Colorable Set}
is shown in \cite{YannakakisG87,CorneilF89} by a reduction from \textsc{Set Cover}.
Here we present a reduction essentially the same as theirs but from \textsc{Independent Set Reconfiguration} under $\TJ$,
which is equivalent to $\CSR_{\TJ}$ with $c = 1$.

\begin{theorem}
\label{thm:split-hard}
Given a split graph and $c$-colorable sets $S$ and $S'$ of size $k$ in the graph,
it is PSPACE-complete to decide whether $S \sevstep S'$
under any of $\TS$, $\TJ$, and $\TAR{k-1}$.
\end{theorem}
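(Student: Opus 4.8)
The plan is to reduce from \textsc{Independent Set Reconfiguration} under $\TJ$, which is PSPACE-complete (for instance by the reduction of Hearn and Demaine, or by the known hardness results cited in the introduction), and to mimic the classical NP-hardness reduction of Yannakakis and Gavril / Corneil and Fonlupt for \textsc{Colorable Set} on split graphs. Given an instance $(H, A, A')$ of \textsc{Independent Set Reconfiguration} under $\TJ$ with $|A| = |A'| = m$ and $H = (V_H, E_H)$, I would build a split graph $G = (K, I; E)$ by taking $K$ to be a clique on the vertex set $V_H$, taking $I$ to be a set of vertices $e_I$ one for each edge $e \in E_H$, and joining $e_I \in I$ to both endpoints of $e$ in $K$. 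I would then set $c = |V_H|$ and pick $k$ and the target sets so that a $c$-colorable set of the right size is forced to contain all of $I$ and an independent set of $H$ inside $K$. Concretely, the $c$-colorable sets of size $|E_H| + m$ that contain all of $I$ correspond exactly to independent sets of $H$ of size $m$ sitting in $K$: since $|I| = |E_H|$, any clique in $G[S]$ uses at most one vertex of $I$, so the only way to exceed clique size $c = |V_H|$ is to have all of $V_H$ in $S \cap K$ together with one vertex of $I$, i.e.\ to have a non-independent pair selected in $K$; and the edge-vertex $e_I$ being adjacent to both endpoints of $e$ guarantees that if both endpoints of $e$ are in $S \cap K$ while all of $I \subseteq S$, then $\{$endpoints of $e\} \cup \{e_I\}$ is a clique of size $c+1$ (using that $c = |V_H| \ge 3$ can be assumed, after discarding trivial instances). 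The reconfiguration sets are $S = A \cup I$ and $S' = A' \cup I$, both of size $m + |E_H| = k$.

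The core of the argument is then to show that $S \sevstep S'$ in $G$ under $\TJ$ if and only if $A \sevstep A'$ in $H$ under $\TJ$. The forward-simulation direction is easy: a token jump $A_i \to A_{i+1}$ in $H$ is simulated by the single token jump $(A_i \cup I) \to (A_{i+1} \cup I)$ in $G$, and each intermediate set is $c$-colorable because $A_{i+1}$ is independent in $H$. For the converse direction I would argue that in any $\TJ$-sequence from $S$ to $S'$ in $G$, every set stays ``normalized'' in the sense that it contains all of $I$: a set of size $k = m + |E_H|$ that omits some $e_I \in I$ would have to place $m+1$ vertices in $K$, but then picking the two endpoints of $e$ together with $m-1$ further vertices of $K$ need not immediately be a $c+1$-clique when $m+1 < c$, so this needs a little care --- the clean fix is to inflate the clique side: add $c - m - 1$ extra ``padding'' vertices to $K$ that are adjacent to no vertex of $I$, and force all of them into every solution by the size count, so that any $c$-colorable set of size exactly $k$ is exactly (all padding vertices) $\cup$ (all of $I$) $\cup$ (an $m$-independent set in the $V_H$-part of $K$), and a set omitting even one $e_I$ has no room to be $c$-colorable and still reach size $k$. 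Once normalization holds, projecting each set to its intersection with the $V_H$-clique yields a valid $\TJ$-sequence of independent sets of $H$ between $A$ and $A'$.

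After establishing the $\TJ$ statement, I would extend to $\TAR{k-1}$ using Lemma~\ref{lem:TJ=TAR} (the sets $S$ and $S'$ have size $k = (k-1)+1$, so $S \sevstep S'$ under $\TAR{k-1}$ iff under $\TJ$), and to $\TS$ by noting that the same reduction, applied to \textsc{Independent Set Reconfiguration} under $\TS$ (also PSPACE-complete, by \cite{HearnD05}), works: a $\TS$-move in $H$ slides a token along an edge of $H$, hence between two adjacent vertices of the $V_H$-clique, which is an edge of $G$, so the simulation and its inverse both respect adjacency in $G$. Finally, membership in PSPACE is standard for all reconfiguration variants (a nondeterministic polynomial-space search over configurations, then Savitch). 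The main obstacle I anticipate is the normalization step: getting the parameters (the number of padding vertices, the value of $c$, the threshold $k$) tuned so that \emph{every} set appearing in a reconfiguration sequence --- not just the endpoints --- is forced to contain all of $I$ and all padding vertices, and so that the correspondence with independent sets of $H$ is exact rather than merely ``onto''; once that is nailed down the simulation in both directions is routine.
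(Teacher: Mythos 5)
Your overall plan (reduce from \textsc{Independent Set Reconfiguration} under $\TJ$, use the clique side of a split graph to make swaps into $\TS$-moves, and transfer to $\TAR{k-1}$ via Lemma~\ref{lem:TJ=TAR}) is exactly the paper's, and those peripheral steps are fine. But the central gadget is broken. You make $e_I$ adjacent to the \emph{endpoints} of $e$ and set $c=|V_H|$, then claim that selecting both endpoints of an edge $e$ in $K$ together with $e_I$ yields a clique of size $c+1$. It does not: $e_I$ is adjacent to only two vertices of $K$, so the largest clique through $e_I$ is $\{u,v,e_I\}$ of size $3$, and since $|S\cap K|\le|V_H|=c$ anyway, \emph{every} subset of your graph is $c$-colorable once $|V_H|\ge 3$. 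Thus $c$-colorability places no constraint at all on which vertices of $K$ are chosen, and the correspondence with independent sets of $H$ collapses. Your proposed padding fix makes this strictly worse: padding vertices in $K$ that are non-adjacent to all of $I$ only shrink the cliques that pass through $I$, so they cannot force independence either.

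The repair is to invert both the adjacency and the encoding, which is what the paper does (following the Yannakakis--Gavril set-cover reduction you cite): make $e_I$ adjacent to $v$ exactly when $v$ is \emph{not} an endpoint of $e$, and encode an independent set $A$ by its complement, i.e.\ take $\phi(A)=(V_H\setminus A)\cup E_H$ with $c=|V_H|-|A|$. Then $C=S\cap K$ has size $c$ by the size bound, and $C\cup\{e_I\}$ is a clique of size $c+1$ precisely when no endpoint of $e$ lies in $C$; hence $c$-colorability of $S\supseteq E_H$ is equivalent to $C$ being a vertex cover, i.e.\ to $V_H\setminus C$ being independent. With that correspondence the ``normalization'' you were worried about comes for free from counting ($|S\cap K|\le c$ forces $E_H\subseteq S$), a single $\TJ$-move on $S$ must swap two vertices of $K$ and hence corresponds to a single $\TJ$-move on the independent set, and the $\TS$ case follows from the same reduction because the two swapped vertices always lie in the clique $K$ (no need to separately reduce from $\TS$-reconfiguration). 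As stated, your construction does not prove the theorem.
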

\begin{proof}
The problem is clearly in PSPACE under all rules.
To show the PSPACE-hardness, we present a polynomial-time 
reduction from \textsc{Independent Set Reconfiguration} under $\TJ$,
which is PSPACE-complete~\cite{KaminskiMM12}.

Let $G$ be a graph on which \textsc{Independent Set Reconfiguration} is considered.
From $G$, we construct a split graph $H = (V(G), E(G); E)$ as follows:
in $H$, $V(G)$ is a clique and $E(G)$ is an independent set,
and $\{v, e\} \in E$ if and only if $v$ is not an endpoint of $e$ for $v \in V(G)$ and $e \in E(G)$.
We define a map $\phi$ from independent sets $I$ of size $|V(G)| - c$ in $G$ 
to $c$-colorable sets of size $|E(G)| + c$ in $H$
as $\phi(I) = V(H) \setminus I$.

\begin{claim}
  \label{clm:split-hard-colorable}
  For every independent set $I$ of size $|V(G)| - c$ in $G$, $\phi(I)$ is $c$-colorable in $H$.
\end{claim}
\begin{proof}
[Proof of Claim~\ref{clm:split-hard-colorable}]
Let $C = V(G) \setminus I$.
Since $C$ is a vertex cover of size $c$ in $G$,
it holds that $C \not\subseteq N_{H}(e)$ for each $e \in E(G)$.
Hence, the split graph $H[\phi(I)] = H[C \cup E(G)]$ has no clique of size larger than $|C| = c$.
\end{proof}

The following claim implies that $\phi$ is a bijection.
\begin{claim}
  \label{clm:split-reduction-bijection}
  If $S$ is a $c$-colorable set of size $|E(G)| + c$ in $H$,
  then $E(G) \subseteq S$
  and $V(G) \setminus S$ is an independent set of size $|V(G)| - c$ in $G$.
 \end{claim}
\begin{proof}
[Proof of Claim~\ref{clm:split-reduction-bijection}]
Since $S$ is $c$-colorable and $V(G)$ is a clique in $H$,
we have $|S \cap V(G)| \le c$.
This implies that $|S \cap E(G)| \ge |S| - |S \cap V(G)| \ge |E(G)|$,
and thus $E(G) \subseteq S$.

Let $C := S \cap V(G)$.
The discussion above implies that $|C| = c$.
Since $S$ is $c$-colorable, each vertex in $S \cap E(G)$ has a non-neighbor in $C$.
Therefore, $C$ is a vertex cover of size $c$ in $G$.
In other words, $V(G) \setminus C = V(G) \setminus S$ is an independent set of size $|V(G)| - c$ in $G$.
\end{proof}

Let $I$ and $I'$ be independent sets of size $|V(G)| - c$ in $G$.
From the discussion above, it suffices to show that
the following properties are equivalent:
\begin{enumerate}
  \item $I \onestep I'$ under $\TJ$; \label{itm:split-onestep-is}
  \item $\phi(I) \onestep \phi(I')$ under $\TS$; \label{itm:split-onstep-TS}
  \item $\phi(I) \onestep \phi(I')$ under $\TJ$; \label{itm:split-onstep-TJ}
  \item $\phi(I) \onestep \phi(I) \cap \phi(I') \onestep \phi(I')$ under $\TAR{|E(G)| + c - 1}$. \label{itm:split-onstep-TAR}
\end{enumerate}

To show that Property~\ref{itm:split-onestep-is} implies Property~\ref{itm:split-onstep-TS},
assume that $I \onestep I'$ under $\TJ$.
Let $v \in I \setminus I'$ and $v' \in I' \setminus I$.
Now $\phi(I) \setminus \phi(I') = v'$ and $\phi(I') \setminus \phi(I) = v$ hold.
Since $v$ and $v'$ are in the same clique $V(G)$ of $H$, we have $\phi(I) \onestep \phi(I')$ under $\TS$.

By the definitions of $\TS$ and $\TJ$,
Property~\ref{itm:split-onstep-TS} implies Property~\ref{itm:split-onstep-TJ}.
By Lemma~\ref{lem:TJ=TAR}, Property~\ref{itm:split-onstep-TJ} and Property~\ref{itm:split-onstep-TAR} are equivalent.

We finally show that Property~\ref{itm:split-onstep-TJ} implies Property~\ref{itm:split-onestep-is}.
Assume that $\phi(I) \onestep \phi(I')$ under $\TJ$.
This implies that $|\phi(I) \setminus \phi(I')| = |\phi(I') \setminus \phi(I)| = 1$.
By Claim~\ref{clm:split-reduction-bijection}, $\symdiff{\phi(I)}{\phi(I')} \subseteq V(G)$.
Therefore, $|I \setminus I'| = |I' \setminus I| = 1$ also holds.
Thus $I \onestep I'$ under $\TJ$.
\end{proof}


\section{Co-comparability graphs}
\label{sec:co-comp}

In \textsc{Shortest Path Reconfiguration}, we are given an unweighted graph $G = (V,E)$, two vertices $s,t \in V$,
and two shortest $s$--$t$ paths $P$ and $P'$ in $G$.
The goal is to decide whether there is a sequence of shortest $s$--$t$ paths $\langle P_{1} =P, P_{2}, \dots, P_{p} = P' \rangle$
such that for each $1 \le i < p$, 
$P_{i}$ and $P_{i+1}$ differ at exactly one vertex.

To show the PSPACE-hardness,
we generalize the reduction for \textsc{Independent Set Reconfiguration} in \cite{KaminskiMM12}.
Note that the following theorem holds even if $c$ is a fixed constant.
\begin{theorem}
\label{thm:cocomp-hard}
Given a co-comparability graph and $c$-colorable sets $S$ and $S'$ of size $k$ in the graph,
it is PSPACE-complete to decide whether $S \sevstep S'$
under any of $\TS$, $\TJ$, and $\TAR{k-1}$.
\end{theorem}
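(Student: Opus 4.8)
The plan is to reduce from \textsc{Shortest Path Reconfiguration} (SPR), which is PSPACE-complete, following and generalizing the construction of Kamiński, Medvedev, and Milanič for \textsc{Independent Set Reconfiguration}. Given an SPR instance $(G, s, t, P, P')$, first observe that a shortest $s$--$t$ path of length $\ell$ picks exactly one vertex from each ``layer'' $L_0 = \{s\}, L_1, \dots, L_\ell = \{t\}$, where $L_i$ is the set of vertices at distance $i$ from $s$ (and at distance $\ell - i$ from $t$); vertices outside these layers are irrelevant and can be deleted. Moreover two such paths that differ in exactly one vertex differ in a single layer. So the combinatorial core is: reconfigure a ``system of distinct representatives, one per layer, forming a path'' by changing one representative at a time. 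The key structural fact I would exploit is that the complement of this layered graph has a natural co-comparability structure: order the vertices by layer index (breaking ties arbitrarily within a layer), and note that the only edges of $G$ relevant to shortest paths go between consecutive layers $L_i, L_{i+1}$ (edges within a layer or skipping a layer cannot lie on a shortest path). One checks that the \emph{complement} of the relevant graph admits a linear ordering witnessing co-comparability — this is essentially the content of the Kamiński–Medvedev–Milanič reduction, and I would cite/adapt their verification.

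The generalization to arbitrary fixed $c$ is the new ingredient. For $c = 1$ the reduction produces a graph whose independent sets of maximum size are exactly the shortest $s$--$t$ paths. To get a statement for general $c$, I would take $c-1$ disjoint additional ``free'' vertices $f_1, \dots, f_{c-1}$ and join them to everything (an apex-style gadget), or more cleanly take the join of the $c=1$ gadget with a clique $Q$ of size $c-1$; by the construction this stays a co-comparability graph (place $Q$ at the front of the linear ordering). Then a $c$-colorable set — equivalently, by Proposition~\ref{prop:colorable=noclique}, a set with no clique of size $c+1$ — of the maximum possible size in the join $H \oplus Q$ must contain all of $Q$ and restrict on $H$ to a set with no clique of size $2$, i.e.\ an independent set of $H$ of maximum size, i.e.\ a shortest path. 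So $c$-colorable sets of the target size $k$ correspond bijectively to shortest $s$--$t$ paths in $G$, and a single $\TJ$ step between two of them corresponds to a single vertex change between the corresponding shortest paths (the $Q$-part and the layer structure force the changed pair to be two vertices of the same layer $L_i$, hence adjacent in the clique-like part — which also handles $\TS$). The equivalences $\TJ \leftrightarrow \TS \leftrightarrow \TAR{k-1}$ on these sets then go through as in the proof of Theorem~\ref{thm:split-hard}: $\TS \Rightarrow \TJ$ trivially, $\TJ \Rightarrow \TS$ because the symmetric difference always lies inside a clique, and $\TJ \Leftrightarrow \TAR{k-1}$ by Lemma~\ref{lem:TJ=TAR}.

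Concretely the steps are: (1) recall SPR is PSPACE-complete and describe the layered structure of shortest $s$--$t$ paths; (2) build the gadget graph $H$ from the cleaned layered graph exactly as in \cite{KaminskiMM12} and verify it is a co-comparability graph by exhibiting the linear order on $V(H)$ (layer index, ties broken arbitrarily, within the \emph{complement}); (3) form $H' = H \oplus Q$ with $|Q| = c-1$ and confirm $H'$ is still co-comparability; (4) prove the bijection between shortest $s$--$t$ paths in $G$ and $c$-colorable sets of size $k := \ell + c$ in $H'$ using Proposition~\ref{prop:colorable=noclique} and a counting argument as in Claim~\ref{clm:split-reduction-bijection}; (5) prove the four-way equivalence of one-step moves exactly as in Theorem~\ref{thm:split-hard}, invoking Lemma~\ref{lem:TJ=TAR} for the $\TAR$ part; (6) conclude, noting membership in PSPACE is immediate.

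The main obstacle I anticipate is step (2): verifying that the gadget graph is genuinely a co-comparability graph, i.e.\ that the proposed vertex ordering $\prec$ has the property that $u \prec v \prec w$ with $\{u,w\} \in E(H)$ forces $\{u,v\} \in E(H)$ or $\{v,w\} \in E(H)$. This requires a careful case analysis on which layers $u, v, w$ belong to and on how the edges of $H$ encode non-adjacency-on-shortest-paths in $G$; it is precisely the place where the structure of \cite{KaminskiMM12} is being reused, so I would lean on their argument and only spell out the modification needed to accommodate the apex clique $Q$ (which is easy: putting $Q$ first in $\prec$ and noting $Q$ is complete to everything makes the co-comparability condition vacuous whenever any of $u,v,w$ lies in $Q$). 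Everything after that — the counting bijection and the move-equivalences — is routine and essentially identical to the split-graph proof already given.
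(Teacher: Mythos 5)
Your overall strategy matches the paper's: reduce from \textsc{Shortest Path Reconfiguration}, reuse the layered gadget of \cite{KaminskiMM12} for $c=1$, pad with $c-1$ extra clique vertices to lift the statement to general $c$, and then run the same four-way one-step equivalence as in Theorem~\ref{thm:split-hard}. Steps (1), (2), (5), (6) are fine. The gap is in steps (3)--(4): the padding gadget you propose --- a \emph{single} clique $Q$ of $c-1$ vertices joined to all of $H$ --- does not force the claimed bijection. A $c$-colorable set of size $k = \ell + c$ in $H \oplus Q$ need not contain all of $Q$: it can instead take only $c-2$ vertices of $Q$ together with $\ell+2$ vertices of $H$ inducing a triangle-free (but not independent) subgraph, since the only constraint the join imposes is that the maximum clique of the $H$-part be at most $c - |S \cap Q|$. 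Such sets genuinely exist. For instance, if $G$ consists of $s$, $t$, and two internal vertices $a,b$ each adjacent to both $s$ and $t$, then in $H$ the set $\{s,a,b,t\}$ induces a single edge, so for $c=2$ it is a $2$-colorable set of size $\ell+c=4$ not of the form $\phi(P)$. Under $\TJ$ every intermediate set has size exactly $k$, so these spurious sets live in the reconfiguration graph and can create reconfiguration paths between $\phi(P)$ and $\phi(P')$ that do not correspond to any sequence of shortest paths; the backward direction of your reduction therefore fails.

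The paper avoids this by attaching a \emph{separate} clique $C_i$ of $c-1$ new vertices to \emph{each} layer, with $C_i$ complete to $D_i \cup D_{i+1}$, and setting $k = (\ell+1)c$. Then each $C_i \cup D_i$ is a clique, so $|S \cap (C_i \cup D_i)| \le c$ for every $i$; summing forces equality in every layer, and an induction along the layers (using that $C_i$ is complete to $D_{i+1}$) shows $C_i \subseteq S$ and $|S \cap D_i| = 1$ for all $i$. This local, per-layer counting is what pins the solution down to exactly one vertex per $D_i$ and rules out the ``trade a padding vertex for an extra path vertex'' escape that breaks your global apex gadget. If you replace your $H \oplus Q$ with this per-layer construction (which is still co-comparability under essentially the ordering you describe, interleaving $C_i$ with $D_i$), the rest of your argument goes through as written.
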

\begin{proof}
Let $G = (V,E)$ be a graph, $P$ and $P'$ be two shortest $s$--$t$ paths  in $G$ for some $s, t \in V$.
Let $\ell$ be the distance from $s$ to $t$ in $G$.
We denote by $D_{i} \subseteq V$ the vertices of distance $i$ from $s$ and distance $\ell -i$ from $t$.
Observe that for every shortest $s$--$t$ path $Q$ and for every $i \in \{0, \dots, \ell\}$, it holds that $|V(Q) \cap D_{i}| = 1$.

Now we define a graph $H$. The vertex set $V(H)$ is $\bigcup_{0 \le i \le \ell} D_{i}$.
Each $D_{i}$ is a clique in $H$.
Two vertices $u \in D_{i}$ and $v \in D_{i+1}$ are adjacent in $H$ if they are not adjacent in $G$.
There is no other edge in $H$.
For $c = 1$, the reduction is already done~\cite{KaminskiMM12}.
If $c \ge 2$, then we modify the graph $H$ and obtain $H'$.
For each $i \in \{0,\dots,\ell\}$, let $C_{i}$ be a clique of $c-1$ new vertices.
The graph $H'$ has the vertex set $V(H') = V(H) \cup \bigcup_{0 \le i \le \ell} C_{i}$,
and it has all edges in $H$. Additionally, all vertices in $C_{i}$ are adjacent to all vertices in $D_{i} \cup D_{i+1}$
(assuming that $D_{\ell+1} = \emptyset$).

\begin{claim}
  \label{clm:cocomp-H'}
  $H'$ is a co-comparability graph.
\end{claim}
\begin{proof}
[Proof of Claim~\ref{clm:cocomp-H'}]
Let $\prec'$ be the partial ordering on $V(H')$ such that
$u \prec' v$ if and only if $u \in C_{i} \cup D_{i}$ and $v \in C_{j} \cup D_{j}$ for some $i < j$.
Let $\prec$ be a linear extension of $\prec'$.

Assume that $u \prec v \prec w$ and $\{u,w\} \in E$.
Let $u \in C_{i} \cup D_{i}$.
From the construction of $H'$, the vertices $v$ and $w$ are in $(C_{i} \cup D_{i}) \cup (C_{i+1} \cup D_{i+1})$.
If $v \in C_{i} \cup D_{i}$, then $\{u,v\} \in E$.
Otherwise, we have $v, w \in C_{i+1} \cup D_{i+1}$, and thus $\{v,w\} \in E$.
\end{proof}

Now we define a map $\phi$ from shortest $s$--$t$ paths in $G$ to $c$-colorable sets of size $(\ell + 1) c$ in $H'$.
For a shortest $s$--$t$ path $P = (v_{0}, \dots, v_{\ell})$ in $G$, we set $\phi(P) = V(P) \cup \bigcup_{0 \le i \le \ell} C_{i}$.
In $H'[\phi(P)]$, a maximal clique is formed either by $C_{i}$ and $v_{i}$ or by $C_{i}$ and $v_{i+1}$ for some $i$.
(Recall that $v_{i}$ and $v_{i+1}$ are not adjacent in $H'$ for all $i$.)
Thus, $\phi(P)$ is $c$-colorable in $H'$.
The following claim implies that $\phi$ is a bijection.
\begin{claim}
  \label{clm:cocomp-reduction-bijection}
  If $S$ is a $c$-colorable set of size $(\ell + 1) c$ in $H'$,
  then $\bigcup_{0 \le i \le \ell} C_{i} \subseteq S$
  and $S \setminus \left(\bigcup_{0 \le i \le \ell} C_{i}\right)$ induces a shortest $s$--$t$ path in $G$.
 \end{claim}
\begin{proof}
[Proof of Claim~\ref{clm:cocomp-reduction-bijection}]
Since $|S| = (\ell + 1) c$ and $C_{i} \cup D_{i}$ is a clique for each $i \in \{0,\dots,\ell\}$,
we have $|S \cap (C_{i} \cup D_{i})| = c$.
It holds that $C_{0} \subseteq S$ as $|C_{0} \cup D_{0}| = c$.
Assume that $C_{i} \subseteq S$ for some $i \ge 0$.
Since all $c-1$ vertices in $C_{i}$ are adjacent to all vertices in $D_{i+1}$,
the set $S$ can contain at most one vertex in $D_{i+1}$.
Thus, $|S \cap (C_{i+1} \cup D_{i+1})| = c$ implies that $C_{i+1} \subseteq S$.

From the discussion above, we can also see that $|S \cap D_{i}| = 1$ for all $i \in \{0,\dots,\ell\}$.
For each $i$, let $v_{i}$ be the unique vertex in $S \cap D_{i}$.
Note that $v_{0} = s$ and $v_{\ell} = t$.
Observe that $v_{i}$ and $v_{i+1}$ are not adjacent in $H'$ for each $i$,
since otherwise $C_{i} + v_{i} + v_{i+1}$ becomes a clique of size $c+1$.
Therefore, $(v_{0}, \dots, v_{\ell})$ is an $s$--$t$ path in $G$.
Since its length is $\ell$, it is shortest.
\end{proof}

Let $P$ and $P'$ be shortest $s$--$t$ path in $G$.
To complete the proof of Theorem~\ref{thm:cocomp-hard}, it suffices to show that the following properties are equivalent:
\begin{enumerate}
  \item $P$ and $P'$ differ at exactly one vertex; \label{itm:cocomp-onestep-path}
  \item $\phi(P) \onestep \phi(P')$ under $\TS$; \label{itm:cocomp-onstep-TS}
  \item $\phi(P) \onestep \phi(P')$ under $\TJ$; and \label{itm:cocomp-onstep-TJ}
  \item $\phi(P) \onestep \phi(P) \cap \phi(P') \onestep \phi(P')$ under $\TAR{(\ell + 1)c - 1}$. \label{itm:cocomp-onstep-TAR}
\end{enumerate}

To show that Property~\ref{itm:cocomp-onestep-path} implies Property~\ref{itm:cocomp-onstep-TS},
assume that $P = (v_{0}, \dots, v_{\ell})$ and $P' = (v_{0}', \dots, v_{\ell}')$ 
differ at exactly one index $i \in \{1, \dots, \ell-1\}$.
Hence $\phi(P) \setminus \phi(P') = v_{i}$ and $\phi(P') \setminus \phi(P) = v_{i}'$.
Since $v_{i}$ and $v_{i}'$ are in the clique $D_{i}$, we have $\phi(P) \onestep \phi(P')$ under $\TS$.

By the definitions of $\TS$ and $\TJ$,
Property~\ref{itm:cocomp-onstep-TS} implies Property~\ref{itm:cocomp-onstep-TJ}.
By Lemma~\ref{lem:TJ=TAR}, Property~\ref{itm:cocomp-onstep-TJ} and Property~\ref{itm:cocomp-onstep-TAR} are equivalent.

Finally, we show that Property~\ref{itm:cocomp-onstep-TJ} implies Property~\ref{itm:cocomp-onestep-path}.
Assume that $\phi(P) \onestep \phi(P')$ under $\TJ$.
This implies that $|\phi(P) \setminus \phi(P')| = |\phi(P') \setminus \phi(P)| = 1$.
Furthermore, Claim~\ref{clm:cocomp-reduction-bijection} implies that 
$\phi(P) \setminus \phi(P') = V(P) \setminus V(P')$ and
$\phi(P') \setminus \phi(P) = V(P') \setminus V(P)$.
Therefore, $P$ and $P'$ differ exactly at one vertex.
\end{proof}

\section{Concluding remarks}
\label{sec:conclusion}
We show that \textsc{Colorable Set Reconfiguration} under $\TARrule$/$\TJ$ is linear-time solvable on interval graphs.
Our results give a sharp contrast of the computational complexity with respect to graph classes, while some cases are left unanswered.
One of the main unsettled cases is $\CSR_{\TARrule}$ with fixed $c > 1$ for chordal graphs (see Table~\ref{tbl:summary}).
In particular, what is the complexity of $\CSR_{\TARrule}$ with $c=2$ for chordal graphs?
This problem is equivalent to the reconfiguration of feedback vertex sets under $\TARrule$ on chordal graphs.
It would be also interesting to study the shortest variant on split graphs with a constant $c$.

Our positive results for $\CSR_{\TARrule}$ on interval graphs and split graphs (Theorems~\ref{thm:interval} and \ref{thm:split_fixed}) 
do not imply analogous results for $\CSR_{\TS}$.
The complexity of $\CSR_{\TS}$ is not settled for these graph classes even with a fixed constant $c$.
It was only recently shown that if $c = 1$, then $\CSR_{\TS}$ can be solved in polynomial time for interval graphs~\cite{BonamyB17}.
For $c \ge 2$, $\CSR_{\TS}$ on interval graphs is left unsettled.
For split graphs, although co-NP-hardness of a related problem is known~\cite{BonamyB17},
$\CSR_{\TS}$ is not solved for all $c \ge 1$.





\bibliographystyle{plainurl}
\bibliography{csr}


\end{document}